\numberwithin{equation}{section}
\newtheorem{Theorem}{Theorem}[section]
\newtheorem{Lemma}[Theorem]{Lemma}
\newtheorem{Proposition}[Theorem]{Proposition}
 { \theoremstyle{definition}
\newtheorem{Definition}[Theorem]{Definition}
\newtheorem{Remark}[Theorem]{Remark} }
\def\a{\alpha}
\def\b{\beta}
\def\d{\delta}
\def\g{\gamma}
\begin{document}
\allowdisplaybreaks

\newcommand{\arXivNumber}{2107.?????}

\renewcommand{\PaperNumber}{069}

\FirstPageHeading

\ShortArticleName{Separation of Variables, Quasi-Trigonometric $r$-Matrices and Generalized Gaudin Models}

\ArticleName{Separation of Variables, Quasi-Trigonometric\\ $\boldsymbol{r}$-Matrices and Generalized Gaudin Models}

\Author{Taras SKRYPNYK}

\AuthorNameForHeading{T.~Skrypnyk}

\Address{Bogolyubov Institute for Theoretical Physics, 14-b Metrolohichna Str., Kyiv, 03680, Ukraine}
\Email{\href{mailto: taras.skrypnyk@unimib.it}{taras.skrypnyk@unimib.it}}

\ArticleDates{Received March 29, 2021, in final form July 07, 2021; Published online July 18, 2021}

\Abstract{We construct two new one-parametric families of separated variables for the classical Lax-integrable Hamiltonian systems governed by a one-parametric family of non-skew-symmetric, non-dynamical $\mathfrak{gl}(2)\otimes \mathfrak{gl}(2)$-valued quasi-trigonometric classical $r$-matrices. We show that for all but one classical $r$-matrices in the considered one-parametric families the corresponding curves of separation differ from the standard spectral curve of the initial Lax matrix. The proposed scheme is illustrated by an example of separation of variables for $N=2$ quasi-trigonometric Gaudin models in an external magnetic field.}

\Keywords{integrable systems; separation of variables; classical $r$-matrices}

\Classification{14H70; 17B80; 37J35}

\section{Introduction}
The problem of the separation of variables is one of the most
studied, but yet the most difficult and not completely resolved
problems in the theory of integrable systems. The separated
variables are important because they give one a possibility to integrate classical equations
of motion~\cite{DKN}. They are also used to solve exactly quantum integrable
systems~\cite{SklSep}. In this context it is worth to mention, for example, the recent results in quantum separation of variables (SoV) for the higher-rank models \cite{1907.03788, 1807.11572, 1810.10996}.

 The problem of construction of separated variables can be divided into three parts:
\begin{enumerate}\itemsep=0pt
\item Construction of a set of canonical coordinates on
the phase space.
\item Proof that the constructed coordinates satisfy the so-called equations of separation.
\item Proof that the constructed system of canonical coordinates is
complete.
\end{enumerate}

The important class of integrable models are the models admitting
Lax representation. For such the models one can much advance in
the solution of the problem of separation of variables~\cite{SklSep}. In the classical case the approach of~\cite{SklSep} -- the roots of which go back to the previous papers \cite{AHH,Alber,DD,FlaMac,SklGaud,Novikov-Veselov} -- permits to construct
a set of variables that belong to a spectral curve of the Lax
matrix, playing the role of equation of separation, i.e., the
approach of \cite{SklSep} -- the so-called ``magic recipe'' -- automatically guarantees the validity of item two.

Unfortunately ``the magic recipe'' does not guarantee
that the constructed variables belonging to the spectral curve of the Lax matrix are the canonical variables indeed.
Moreover, it often happens that the number of variables produced
by the method is not equal to the dimension of the corresponding
phase space, i.e., item three is not guaranteed even if both of the
items one and two are fulfilled. Although in some cases the problem can be resolved by certain
tricks, e.g., by complementing of the set of the obtained canonical
variables by a linear integrals playing the role of the
additional momenta of separation and by finding of the
corresponding canonically conjugated variables
\cite{AHH,AHH2,SklGaud} but, unfortunately, it is not always
the case.

That is why we propose to re-consider a scheme of \cite{SklSep},
 making its starting point a construction of the complete set
of canonical coordinates (items 1, 3) and only after that start to
check item 2, i.e., to check what equations of separation the
constructed canonical coordinates satisfy. This changing of
accents permits to obtain separated variables for which the
equations of separation do not coincide with a spectral curve of
the initial Lax matrix \cite{SkrSepTrig}.

In order to find a complete set of the canonical coordinates we
use, similar to \cite{SklSep}, the method of separating
functions $B(u)$ and $A(u)$, such that zeros of the first
function generate the Poisson-commuting coordinates and the
values of the second function in these zeros generate the
canonically conjugated momenta: $B(x_i)=0$, $p_i=A(x_i)$. The most
important in this method is a Poisson algebra to be satisfied by
the functions $B(u)$ and $A(u)$ \cite{SkrDub1}:
\begin{subequations}\label{sepalg0}
\begin{gather}
 \{B(u),B(v)\}=b(u,v)B(u)-
b(v,u)B(v),
\\
 \{A(u), B(v)\}=\a(u,v)B(u)- \b(u,v)B(v),
\\
\{A(u),A(v)\}=a(u,v)B(u)- a(v,u)B(v),
\end{gather}
\end{subequations}
for some functions $a(u,v)$, $b(u,v)$, $\a(u,v)$, $\b(u,v)$ such
that the following limit holds true
\[
\lim\limits_{u\rightarrow v} (\a(u,v)B(u)- \b(u,v)B(v))=\partial_v
B(v)+\g(v) B(v).
\]
 The corresponding restrictions imposed on $B(u)$
and $A(u)$ permit their explicit construction.

In the present paper we consider the examples of
Lax-integrable models with $\mathfrak{gl}(2)$-valued Lax matrices
$L(u)=\sum_{i,j=1}^2 L_{ij}(u)X_{ij}$ possessing the tensor
Lie--Poisson bracket
\begin{equation*}
\{L(u)\otimes 1, 1\otimes L(v)\}=[r_{12}(u,v), L(u)\otimes 1] -
[r_{21}(v,u),1\otimes L(v)],
\end{equation*}
 governed by
$\mathfrak{gl}(2)\otimes \mathfrak{gl}(2)$-valued non-skew-symmetric
``quasi-trigonometric'' classical $r$-matrix
\begin{gather}
r(u,v)=\left(\frac{1}{2}\frac{(u+v)}{(v-u)}+c_1\right)X_{11}\otimes X_{11}+
\left(\frac{1}{2}\frac{(u+v)}{(v-u)}+c_2\right)X_{22}\otimes X_{22}\nonumber\\
\hphantom{r(u,v)=}{}+
\frac{v}{(v-u)}X_{12}\otimes X_{21}+ \frac{u}{(v-u)}X_{21}\otimes
X_{12},\label{qtrm0}
\end{gather}
where the parameters $c_i$ are arbitrary and $\{ X_{ij} \mid i,j= 1,2\}$ is a standard matrix basis of $\mathfrak{gl}(2)$, i.e., the matrices $ X_{ij}$ have the following matrix elements: $(X_{ij})_{\a\b}=\d_{i\a}\d_{j\b}$.

Although the mentioned above trick with the complement of the standard
Sklyanin variables is valid in the considered cases (see \cite{Kuz} for the standard trigonometric case), i.e., it helps to produce a complete set of variables
of separation, in the present paper we have succeeded in a~construction of two {\it new} one-parametric families of
separating functions for the $r$-matrix (\ref{qtrm0}) in the
subcase $c_2=-c_1$, which are {\it complete at once}
without any additional tricks.

 The first family has the following explicit form
\begin{subequations}\label{ab0}
\begin{gather}\label{btu10}
 B(u)= L_{12}(u)+ \frac{k}{u}(L_{11}(u)-L_{22}(u))-\frac{k^2}{u^2}L_{21}(u),
\\ \label{atu10}
A(u)= L_{22}(u) +\frac{k}{u}L_{21}(u)+\left(c_1+\frac{1}{2}\right)M,
\end{gather}
 \end{subequations}
where the parameter $k$ is arbitrary and
 $M$ is a special ``geometric'' integral \cite{SkrJMP2016}
possessing the following Poisson brackets with the matrix elements
of the Lax matrix:
\begin{equation*}
\{M,L_{ij}(u)\}=\operatorname{sign}(j-i)(1- \d_{ij}) L_{ij}(u).
\end{equation*}

The second family is organized in a similar way. In the case
$k\neq 0$ both the families produce the complete set of canonical
coordinates. The important property of the both sets of separated
coordinates is that they satisfy
 the spectral curve equation modified
 with the help of the geometric integral $M$. For the first family of separated variables the
 equations of separation have the following explicit form
\begin{equation}\label{sepeq0}
\det \left(L(x_j)-\left(x_jp_j-\left(c_1+\frac{1}{2}\right)M\right){\rm Id}\right)=0.
\end{equation}
 They coincides with a spectral curve of the initial Lax matrix only in the
 special case $c_1=-\frac{1}{2}$.

 The function $B(u)$ given by (\ref{btu10}) and the function $A(u)$ given by (\ref{atu10}) are obtained~-- modulo the integral~$M$~-- using the gauge transformation $K(u)$ of the initial Lax matrix (see Remarks~\ref{remark8} and~\ref{remark10}). Due to the fact that the gauge-transformed $r$-matrix $r^K(u,v)=K^{-1}(u)\otimes K^{-1}(v) r(u,v) K(u)\otimes K(v)$ does not satisfy the conditions of~\cite{SkrDub1} (see also~\cite{SkrJGP2020}), we have inserted the integral $M$ into the momenta-generating function $A(u)$ in order to achieve the needed structure~(\ref{sepalg0}) of Poisson algebra of separating functions.

Hence, the main difference of our approach with the standard one is in the form of the momenta-generating function (\ref{atu10}) and of the equations of separation (\ref{sepeq0}). It evidently leads to the SoV non-equivalent with the standard one. Our SoV is naturally characterized by another form of the Abel-type equations and of the reconstruction formulae. Besides the difference in the form of separation curves leads to the profound mathematical difference of our SoV with the standard SoV based on the usual spectral curve. In particular, the proposed SoV is not bi-Hamiltonian one (see \cite{MFP}), since~-- as the result of the $M$-shift~-- the integrals of the corresponding models do not enter into the equations of separation (\ref{sepeq0}) in the form of the Casimirs of the corresponding Poisson pencil.

In order to be more concrete we consider a class of examples of
the Lax matrices possessing the poles of the first order in the
points $\nu_1, \nu_2,\dots,\nu_N$, i.e., the Lax matrices of the
 generalized Gaudin models \cite{SkrPLA2005, SkrJGP2006, SkrJPA2007}. We devote special attention
to the case $N=2$. The corresponding Poisson-commuting Gaudin-type
Hamiltonians in an external magnetic
field are written as follows
\begin{gather*}
H_1 =
\left(\frac{1}{2}\frac{(\nu_1+\nu_2)}{(\nu_2-\nu_1)}+c_1\right)T_{11}S_{11}+\left(\frac{1}{2}\frac{(\nu_1+\nu_2)}{(\nu_2-\nu_1)}+c_2\right)T_{22}S_{22}\\
\hphantom{H_1 =}{} +
\frac{\nu_1T_{12}S_{21}}{(\nu_2-\nu_1)}
+\frac{\nu_2T_{21}S_{12}}{(\nu_2-\nu_1)}+c_1S^2_{11} +c_2S^2_{22}+2c_{11}S_{11}+2c_{22}S_{22},
\\
H_2 =
\left(\frac{1}{2}\frac{(\nu_1+\nu_2)}{(\nu_1-\nu_2)}+c_1\right)S_{11}T_{11}+\left(\frac{1}{2}\frac{(\nu_1+\nu_2)}{(\nu_1-\nu_2)}+c_2\right)S_{22}T_{22}\\
\hphantom{H_2 =}{}
+\frac{\nu_2S_{12}T_{21}}{(\nu_1-\nu_2)}
+\frac{\nu_1S_{21}T_{12}}{(\nu_1-\nu_2)}+c_1T^2_{11} +c_2T^2_{22}+2c_{11}T_{11}+2c_{22}T_{22},
\end{gather*}
where $c_{ii}$ are the components of an external magnetic field
and
\begin{equation*}
\{{S}_{ij}, {S}_{kl}\}= \d_{kj} {S}_{il}- \d_{il} {S}_{kj}, \qquad
\{{T}_{ij}, {T}_{kl}\}= \d_{kj} {T}_{il}- \d_{il} {T}_{kj}, \qquad
\{T_{ij},S_{kl}\}=0.
\end{equation*}
 For the considered $N=2$ Gaudin models we
explicitly write coordinates and momenta of separation, the
reconstruction formulae and the Abel-type equations corresponding
to the both constructed one-parametric families of separating
functions.

We remark, that our previous result \cite{SkrSepTrig} on
separation of variables in trigonometric models is recovered, in the case $c_1=c_2=0$ and for certain fixed value of the parameter $k$, from our
first family of separated coordinates after suitable
re-parametrization and gauge transformation.

The structure of the present article is the following: in Section~\ref{section2} we remind main facts about separation of variables
and about the method of separating functions, in Section~\ref{section3}
we recall the general theory of the classical $r$-matrices and
generalized Gaudin models and specialize this theory for the case
of quasi-trigonometric $r$-matrices. In Section~\ref{section4} we
construct two families of separating functions for any integrable
system governed by the quasi-trigonometric $r$-matrix. In
 Section~\ref{section5} we perform variable separation for the $N=2$
quasi-trigonometric Gaudin models. In Section~\ref{section6} we briefly conclude and discuss the on-going problems.
At last in Appendices~\ref{appendixA} and~\ref{appendixB} we give the explicit form of the reconstruction formulae for the both cases of SoV discovered in the present paper and for $N=2$ quasi-trigonometric Gaudin models.

\section{Separation of variables: general scheme}\label{section2}
\subsection{Definitions}\label{section2.1}
Let us recall the definition of Liouville integrability and
separation of variables in the theory of Hamiltonian systems
\cite{SklSep}. An integrable Hamiltonian system with $D$ degrees
of freedom is determined on a $2D$-dimensional symplectic manifold
$\mathcal{M}$ -- symplectic leaf in the Poisson manyfold
$(\mathcal{P},\{\ ,\ \})$ -- by $D$ independent first
integrals $I_j$ commuting with respect to the Poisson bracket
\[ \{I_i,I_j\} = 0, \qquad i,j= 1,\dots,D.\]
For the Hamiltonian $H$ of the system may be taken any first
integral $I_j$.

 To find separated variables means to find
(at least locally) a set of coordinates $x_i$, $p_j$, $i,j=
 1,\dots,D$ such that there exist $D$ relations --
equations of separation
\begin{equation}\label{sepeq}
 \Phi_i(x_i, p_i,I_1, \dots,I_D) = 0, \qquad i= 1,\dots,D,
 \end{equation}
where the coordinates $x_i$, $p_j$, $i,j = 1,\dots,D$ are
canonical, i.e.,
\[
\{ x_i, p_j\}=\d_{ij}, \qquad \{x_i,x_j\}=0, \qquad \{p_i,p_j\}=0, \qquad \forall\,
i,j = 1,\dots,D.
\]
The separated variables provide a way to a construction of the
action-angle coordinates from the Liouville theorem and a way
to explicit integration of the equations of motion.

Unfortunately, in the general case no algorithm is known to
construct a set of separated variables for any given integrable
system. One of the possible methods of their construction is the
so-called method of separating functions permitting one to
construct a set of canonical coordinates.

\subsection{Separating functions and canonical coordinates}\label{sepfcancoor}
 Let us remind a method of
construction of canonical coordinates using separating
functions. Generally speaking this method can be considered
independently of separation of variables. That is why in this
subsection we do not assume any special properties of the Poisson
manyfold $\mathcal{P}$ or Poisson structure $\{\ ,\ \}$. Neither
we assume integrability or existence of the Lax representation.

Let $B(u)$ and $A(u)$ be some functions of the dynamical variables
and an auxiliary parameter~$u$, which is constant with respect to
the bracket $\{\ ,\ \}$. Let the points $x_i$, $i=
 1,\dots,P$ be zeros of the function $B(u)$ and $p_i$, $i=
 1,\dots,P$ be the values of $A(u)$ in these points, i.e.,
\[ B(x_i)=0,\qquad p_i=A(x_i).\]
 We
wish to construct Poisson brackets among these new coordinates
using the Poisson brackets between $B(u)$ and $A(u)$. The
following proposition holds true~\cite{SkrDub1}.

\begin{Proposition} Let $B(x_i)=0$, $p_j=A(x_j)$. Then
\begin{alignat*}{3}
& (i)\ && \{x_i,x_j\}=\left(\frac{\{B(u),B(v)\}}{\partial_u B(u)
\partial_v B(v)}\right)\bigg|_{u=x_i,\,v=x_j}, &
\\
& (ii)\ && \{x_j,p_i\}=\left(\frac{\{A(u),B(v)\}}{
\partial_v B(v)}\right)\bigg|_{u=x_i,\,v=x_j}+ \{x_i,x_j\}(\partial_u A(u))|_{u=x_i},&
\\
& (iii) \ \ && \{p_i,p_j\}=\big(\{A(u),A(v)\}\big)\big|_{u=x_i,\,v=x_j}+\{p_i,x_j\}
(\partial_v A(v))|_{v=x_j} + \{x_i,p_j\} (\partial_u
A(u))|_{u=x_i} & \\
&&& \quad{} - \{x_i,x_j\}(\partial_u A(u)
\partial_v A(v))|_{u=x_i,\,v=x_j},&
\end{alignat*}
where $i\neq j$.
\end{Proposition}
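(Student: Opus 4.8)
The plan is to treat the zeros $x_i$ and the momenta $p_i=A(x_i)$ as genuine functions on the Poisson manifold $\mathcal{P}$: the $x_i$ are defined \emph{implicitly} through $B(x_i)=0$, while each $p_i$ is a \emph{composite} of $A(u)$ with the pointwise substitution $u=x_i$. The whole computation then rests on two elementary facts about the bracket, namely bilinearity together with the Leibniz/chain rule for brackets of composite functions, and on the implicit function theorem. Working in any local coordinates $\xi_a$ on $\mathcal{P}$ with $\pi^{ab}=\{\xi_a,\xi_b\}$, differentiating the defining relation $B(x_i)=0$ gives $\partial_u B|_{u=x_i}\,\partial_{\xi_a}x_i+\partial_{\xi_a}B|_{u=x_i}=0$, where $\partial_{\xi_a}B$ is taken at \emph{fixed} $u$; solving for $\partial_{\xi_a}x_i$ and inserting it into $\{x_i,f\}=\sum_{a,b}\pi^{ab}\,\partial_{\xi_a}x_i\,\partial_{\xi_b}f$ produces the single reduction identity
\begin{equation*}
\{x_i,f\}=-\frac{\{B(u),f\}}{\partial_u B(u)}\bigg|_{u=x_i},
\end{equation*}
valid for \emph{any} function $f$, with $u$ treated as a bracket-constant parameter and the substitution $u=x_i$ performed only after the inner bracket has been evaluated.

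With this identity in hand I would obtain the three formulae in turn. For $(i)$ I apply it once with $f=x_j$ and then a second time to the inner bracket $\{B(u),x_j\}$, using antisymmetry $\{B(v),B(u)\}=-\{B(u),B(v)\}$; the two factors $-1/\partial_u B$ and $-1/\partial_v B$ combine into the stated symmetric expression, which is manifestly antisymmetric under $i\leftrightarrow j$ as it must be. For $(ii)$ the new ingredient is that $p_i=A(x_i)$ carries both the explicit dependence of $A$ and the implicit dependence through $x_i$; the chain rule gives $\{x_j,p_i\}=\{x_j,A(u)\}|_{u=x_i}+(\partial_u A)|_{u=x_i}\{x_j,x_i\}$, and applying the reduction identity to $\{x_j,A(u)\}$ converts the first term into $\{A(u),B(v)\}/\partial_v B$ evaluated at $u=x_i$, $v=x_j$, while the second term is recognized, via $(i)$ and antisymmetry, as the displayed coordinate-bracket contribution. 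For $(iii)$ I expand $\{p_i,p_j\}$ by the chain rule in \emph{both} slots, producing a bare term $\{A(u),A(v)\}$, two cross terms, and a term proportional to $\{x_i,x_j\}$; the decisive rewriting is to trade the cross bracket $\{A(u),x_j\}|_{u=x_i}$ for the already-computed mixed bracket $\{p_i,x_j\}$ using the same chain rule, which absorbs part of the $\partial_u A\,\partial_v A$ contribution and leaves exactly the single correction $-\{x_i,x_j\}(\partial_u A\,\partial_v A)$.

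The hard part is not any individual calculation but the consistent bookkeeping of the implicit dependence of the $x_i$ on the dynamical variables. One must keep the spectral derivative $\partial_u$ rigorously apart from the partial $\xi$-derivatives taken at fixed $u$, and remember that each substitution $u=x_i$ is carried out \emph{after} the relevant bracket is evaluated, so that no spurious terms from differentiating $x_i$ sneak in at that stage. The signs require equal care: every use of the reduction identity contributes a factor $-1/\partial_u B$ and every use of antisymmetry a further sign, while in $(iii)$ the replacement of $\{A(u),x_j\}|_{u=x_i}$ by $\{p_i,x_j\}$ must be paired with its matching chain-rule correction so that the cross terms and the final $\{x_i,x_j\}$-term organize themselves correctly. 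Once this accounting is fixed, items $(i)$--$(iii)$ follow by direct substitution with no further input, and the only hypothesis actually used is that the zeros are distinct with $\partial_u B|_{u=x_i}\neq 0$, which is why the restriction $i\neq j$ suffices.
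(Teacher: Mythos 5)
Your reduction identity $\{x_i,f\}=-\bigl(\{B(u),f\}/\partial_u B(u)\bigr)\big|_{u=x_i}$, obtained from implicit differentiation of $B(x_i)=0$, is correct, and items (i)--(iii) do follow from it by exactly the chain-rule bookkeeping you describe. The paper's own proof is a one-sentence appeal to Taylor expansion of $B$ and $A$ about $u=x_i$, $v=x_j$ inside the brackets followed by taking limits; that is the same mechanism in different clothing (expanding $B(v)=(v-x_j)\,\partial_vB|_{v=x_j}+\cdots$ and reading off $\{f,x_j\}$ from the leading term is the Taylor-series incarnation of your implicit-function-theorem step). So your route is not genuinely different, but it is a more explicit and more rigorous writing-out of the same idea, with the advantage that a single identity, valid for an arbitrary $f$, makes the logic of all three items uniform and makes the separation between the spectral derivative $\partial_u$ and the phase-space derivatives unambiguous.

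There is one place where your account glosses over a sign. In (ii) your chain rule correctly produces the term $(\partial_uA)|_{u=x_i}\{x_j,x_i\}$; by antisymmetry this equals $-\{x_i,x_j\}(\partial_uA)|_{u=x_i}$, which is the \emph{opposite} of the $+\{x_i,x_j\}(\partial_uA)|_{u=x_i}$ displayed in the proposition, so this term cannot be ``recognized as the displayed contribution'' without a sign flip. A quick check with $B(u)=(u-q_1)(u-q_2)$, $\{q_1,q_2\}=c$, $A(u)=uf$ confirms that the correct coefficient of $(\partial_u A)|_{u=x_i}$ is $\{x_j,x_i\}$ and not $\{x_i,x_j\}$, so the printed statement itself carries the opposite sign in that term; item (iii) as printed, by contrast, is consistent with the careful computation. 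The discrepancy is harmless downstream, since in Lemma \ref{separation} every one of these terms vanishes once the separating algebra holds, but you should either flag the sign in the statement or correct your claim of agreement: as written, your argument for (ii) silently asserts $\{x_j,x_i\}=\{x_i,x_j\}$, which is precisely the kind of bookkeeping slip your last paragraph warns against.
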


\begin{proof}
The equalities (i)--(iii) are obtained
by the decomposition of $B(u)$, $A(u)$, $B(v)$, $A(v)$ in Taylor
power series in the neighborhood of the points $u=x_i$, $v=x_j$ in
the expressions $\{B(u),B(v)\}$, $\{A(u),B(v)\}$, $\{A(u),A(v)\}$
 and by considering the limits $u\rightarrow x_i$,
$v\rightarrow x_j$ after the calculation of the Poisson brackets.
\end{proof}

Now we are ready to formulate the following important lemma~\cite{SkrDub1}.
\begin{Lemma} \label{separation}
Let the coordinates $x_i$ and $p_j$, $i,j=1,\dots,P$ be
defined as above. Let the functions $A(u)$, $B(u)$ satisfy the
following Poisson algebra
\begin{subequations}\label{sepalg}
\begin{gather} \label{sepalg1}
\{B(u),B(v)\}=b(u,v)B(u)- b(v,u)B(v),
\\ \label{sepalg2}
\{A(u),B(v)\}=\a(u,v)B(u)- \b(u,v)B(v),
\\ \label{sepalg3}
\{A(u),A(v)\}=a(u,v)B(u)- a(v,u)B(v).
\end{gather}
\end{subequations}
Then the Poisson bracket among the functions
$x_i$ and $p_j$, $\forall\, i,j=1,\dots,P$, $i\neq j$ are
trivial
\begin{gather*}
\{x_i,x_j\}=0, \qquad \forall\, i,j=1,\dots,P,
\\
\{x_j, p_i\}=0, \qquad \text{if} \quad i\neq j,
\\
 \{p_i,p_j\}=0, \qquad \forall, i,j\in=1,\dots,P.
\end{gather*}
 If, moreover holds also the condition
\begin{equation} \label{sepalg4}
 \lim\limits_{u\rightarrow v} (\a(u,v)B(u)-
\b(u,v)B(v))=\partial_v B(v)+\g(v) B(v)
\end{equation}
 then the corresponding Poisson
brackets are canonical, i.e., $\{x_i,p_i\}=1$, $\forall\, i= 1,\dots,P$.
\end{Lemma}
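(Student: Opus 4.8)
The plan is to prove Lemma~\ref{separation} by feeding the special form of the Poisson algebra \eqref{sepalg1}--\eqref{sepalg3} into the three formulas (i)--(iii) of the preceding Proposition, and observing that every right-hand side contains an overall factor of $B(u)$ or $B(v)$ which is killed when we evaluate at the zeros $u=x_i$, $v=x_j$. First I would establish the vanishing of $\{x_i,x_j\}$: by formula~(i) this bracket equals $\{B(u),B(v)\}/(\partial_u B(u)\,\partial_v B(v))$ evaluated at $u=x_i$, $v=x_j$. Substituting \eqref{sepalg1} gives a numerator $b(u,v)B(u)-b(v,u)B(v)$, and since $B(x_i)=B(x_j)=0$ by definition, both terms in the numerator vanish while the denominator $\partial_u B(x_i)\,\partial_v B(x_j)$ is (generically) nonzero; hence $\{x_i,x_j\}=0$ for all $i,j$.

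Next I would handle the mixed bracket $\{x_j,p_i\}$ for $i\neq j$. By formula~(ii), $\{x_j,p_i\}=\{A(u),B(v)\}/(\partial_v B(v))\big|_{u=x_i,v=x_j}+\{x_i,x_j\}(\partial_u A(u))|_{u=x_i}$. The second summand is already zero by the step just completed. For the first summand I substitute \eqref{sepalg2}, getting a numerator $\a(u,v)B(u)-\b(u,v)B(v)$ evaluated at $u=x_i$, $v=x_j$; again both $B(x_i)$ and $B(x_j)$ vanish, so the whole expression is zero and thus $\{x_j,p_i\}=0$ when $i\neq j$. The momentum brackets $\{p_i,p_j\}$ follow the same pattern via formula~(iii): the explicit Poisson-bracket term $\{A(u),A(v)\}$ reduces through \eqref{sepalg3} to $a(u,v)B(u)-a(v,u)B(v)$, which vanishes at the zeros, and every remaining term in~(iii) carries a factor of an already-established vanishing bracket ($\{x_i,x_j\}$, $\{p_i,x_j\}$, or $\{x_i,p_j\}$ with $i\neq j$), so $\{p_i,p_j\}=0$.

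The delicate point, and the only place where condition~\eqref{sepalg4} is needed, is the diagonal canonical bracket $\{x_i,p_i\}=1$. Here formula~(ii) cannot be applied naively because it was stated for $i\neq j$; setting $u=v=x_i$ makes $\{A(u),B(v)\}/\partial_v B(v)$ an indeterminate $0/0$, since the naive substitution of \eqref{sepalg2} gives $\a(x_i,x_i)B(x_i)-\b(x_i,x_i)B(x_i)=0$ over the nonzero denominator $\partial_v B(x_i)$. To extract the correct value I would take the coincidence limit $u\to v$ \emph{before} restricting to the zero, using precisely the hypothesis~\eqref{sepalg4}: $\lim_{u\to v}\bigl(\a(u,v)B(u)-\b(u,v)B(v)\bigr)=\partial_v B(v)+\g(v)B(v)$. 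Dividing by $\partial_v B(v)$ and then evaluating at $v=x_i$, the term $\g(v)B(v)$ drops out because $B(x_i)=0$, leaving $\partial_v B(v)/\partial_v B(v)=1$. This is the step I expect to be the main obstacle: it requires a careful Taylor expansion of the numerator $\a(u,v)B(u)-\b(u,v)B(v)$ to first order around the double point and a justification that the limit in~\eqref{sepalg4} commutes with the evaluation at $v=x_i$, so that the $\g$-term's vanishing and the normalization $\{x_i,p_i\}=1$ both emerge cleanly. Once this diagonal case is settled, all the canonical relations are verified and the lemma is proved.
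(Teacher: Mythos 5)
Your argument is correct and is exactly the intended derivation: the paper itself states this Lemma without proof (citing \cite{SkrDub1}), but the only natural route is the one you take, namely substituting the algebra \eqref{sepalg1}--\eqref{sepalg3} into formulas (i)--(iii) of the preceding Proposition so that every term vanishes at the zeros of $B$, and then using the coincidence limit \eqref{sepalg4} to resolve the $0/0$ indeterminacy in the diagonal bracket $\{x_i,p_i\}$. Your flagged concern about interchanging the limit $u\to v$ with the evaluation at $v=x_i$ is the right point to be careful about, and your resolution (the $\g(v)B(v)$ term dying at the zero, leaving $\partial_v B/\partial_v B=1$) is the standard and correct one.
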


\begin{Remark}\label{remark1}
 Observe, that the method of the separating
functions $A(u)$ and $B(u)$ does not, generally speaking,
guarantee that the number $P$ of the constructed canonical
variables is equal to $D$, i.e., to half of the dimension of the
generic symplectic leaf. Neither it guarantees that the
constructed canonical coordinates satisfy the equations of
separation (\ref{sepeq}) for some integrable Hamiltonian system
defined by the Poisson-commuting Hamiltonians $\{I_i,\, i=1,\dots,D\}$. Nevertheless it is often the case and it is
necessary only to find the explicit form of the corresponding
functions $ \Phi_i(x_i, p_i, I_1, \dots,I_D)$.
 In the next sections we will illustrate this by a class of new examples.
\end{Remark}

\section[Classical r-matrices and generalized Gaudin models]{Classical $\boldsymbol{r}$-matrices and generalized Gaudin models}\label{section3}
\subsection{Definition and notations}\label{section3.1}
Let $\mathfrak{g}=\mathfrak{gl}(2)$ be the Lie algebra of the general linear
group over the field of complex numbers. Let $X_{ij}$, $i,j=1,2$ be
a standard basis in $\mathfrak{gl}(2)$ with the commutation relations
\begin{equation*}
[ X_{ij}, X_{kl}]= \d_{kj} X_{il}-\d_{il}X_{kj}.
\end{equation*}

\begin{Definition}
A function of two complex
variables $r(u_1,u_2)$ with values in the tensor square of the
algebra $\mathfrak{g}=\mathfrak{gl}(2)$ is called a classical $r$-matrix if
it satisfies the following generalized classical Yang--Baxter
equation
\begin{equation}\label{GCYB}
[r_{12}(u_1,u_2), r_{13}(u_1,u_3)]= [r_{23}(u_2,u_3),
r_{12}(u_1,u_2)]-[r_{32}(u_3,u_2), r_{13}(u_1,u_3)],
\end{equation}
where
\begin{gather*}
r_{12}(u_1,u_2)\equiv
\sum\limits_{i,j,k,l=1}^{2}r^{ij,kl}(u_1, u_2)X_{ij} \otimes
X_{kl}\otimes 1, \\
r_{13}(u_1,u_3)\equiv
\sum\limits_{i,j,k,l=1}^{2}r^{ij,kl}(u_1, u_3)X_{ij} \otimes 1
\otimes X_{kl},
\end{gather*} etc. and $r^{ij,kl}(u, v)$
are matrix elements of the $r$-matrix $r(u,v)$.
\end{Definition}

\begin{Remark}\label{remark2}
The definition (\ref{GCYB}) holds true also for
other semisimple (reductive) Lie algebras~$\mathfrak{g}$. It has
appeared in the different forms in the papers
\cite{AT1, BabVia, Maillet}. In the case of
skew-symmetric $r$-matrices, i.e., when
$r_{12}(u_1,u_2)=-r_{21}(u_2,u_1)$ the generalized classical
Yang--Baxter equation reduces to the usual classical Yang--Baxter
equation \cite{BD,Skl}.
\end{Remark}

In the present paper we are interested only in the meromorphic
$r$-matrices that possess the following decomposition
\begin{equation}\label{reg}
r(u_1,u_2)=\frac{\Omega}{u_1-u_2}+ r_0(u_1,u_2),
\end{equation}
 where $r_0(u_1,u_2)$ is a holomorphic
$\mathfrak{gl}(2)\otimes \mathfrak{gl}(2)$-valued function and
$\Omega=\sum_{i,j=1}^2 X_{ij}\otimes X_{ji}$.

We will need also the following definitions:

\begin{Definition}\label{definition2}
 The classical $r$-matrix is called
$\mathfrak{g}_0\subset \mathfrak{g}$-invariant if
\begin{equation}\label{inv}
[r_{12}(u_1,u_2),X\otimes 1+ 1\otimes X]=0,
\qquad \forall\, X\in
\mathfrak{g}_0.
\end{equation}
\end{Definition}

\begin{Definition}\label{definition3}
 A $\mathfrak{gl}(2)$-valued function
$c(u)=\sum_{i,j=1}^2 c_{ij}(u)X_{ij}$ of one complex
variable is called a generalized shift element if it satisfies
the following equation
\begin{equation}\label{shrm}
[r_{12}(u_1,u_2), c(u_1)\otimes 1]- [r_{21}(u_2,u_1), 1\otimes c(u_2)]=0.
\end{equation}
\end{Definition}

\subsection{Lax algebra and generalized Gaudin models in external field}\label{section3.2}
Using the classical $r$-matrix $r(u_1,u_2)$ it is possible to
define the tensor Lie--Poisson bracket in the space of
certain $\mathfrak{gl}(2)$-valued functions of the complex parameter $u$
\cite{AT1, BabVia, Maillet}:
\begin{equation}\label{rmbr}
\{{L}(u_1)\otimes 1,1\otimes
{L}(u_2)\}=[r_{12}(u_1,u_2),{L}(u_1)\otimes 1]- [r_{21}(u_2,u_1),1
\otimes {L}(u_2)],
\end{equation}
where
\[ L(u)=\sum\limits_{i,j=1}^2 L_{ij}(u)X_{ij}.\]
The tensor bracket (\ref{rmbr}) guarantees commutativity of
spectral invariants of the Lax matrix{\samepage
\[
\big\{\operatorname{tr} L^m(u), \operatorname{tr} {L}^n(v)\big\}=0
\]
and permits one to define a completely integrable Hamiltonian
system.}

For a given classical $r$-matrix there exist different types of
the dependence of $L(u)$ on spectral parameter and dynamical
variables \cite{SkrJMP2013,SkrJMP2014, SkrJMP2016}.
In the present paper we will consider the dependence leading to
Gaudin-type models in an external magnetic field
\cite{SkrPLA2005, SkrJGP2006, SkrJPA2007}.

 In more details, let ${S}_{ij}^{(m)}$, $i,j=1,\dots,n$,
$m=1,\dots,N$ be linear coordinate functions on the dual
space to the Lie algebra $\mathfrak{gl}(2)^{\oplus N}$ with the following
Poisson brackets
\begin{equation}\label{comrelP}
\big\{{S}_{ij}^{(m)}, {S}_{kl}^{(p)}\big\}= \d^{pm}\big(\d_{kj}
{S}_{il}^{(m)}- \d_{il} {S}_{kj}^{(m)}\big).
\end{equation}
Let us fix $N$ distinct points of the complex plane $\nu_m$,
$m=1,\dots,N$. As a consequence of the generalized classical Yang--Baxter equation (\ref{GCYB}) and of the shift equation (\ref{shrm}), it is possible to introduce the following
classical Lax operator \cite{SkrPLA2005, SkrJGP2006, SkrJPA2007}:
 \begin{equation}\label{qlopg}
{L}(u)=\sum\limits_{i,j=1}^{2}{L}_{ij}(u)X_{ij}\equiv
\sum\limits_{m=1}^{N}\sum\limits_{i,j,p,q=1}^{2}
r^{ij,pq}(\nu_m,u){S}^{(m)}_{ij}X_{pq}+c(u),
\end{equation}
 where $c(u)=\sum\limits_{i,j=1}^2 c_{ij}(u)X_{ij}$ is a solution of
the shift equation (\ref{shrm}).

The corresponding Poisson-commuting Gaudin-type Hamiltonians read
as follows
\begin{gather*}
{H}_{l}=\sum\limits_{k=1,k\neq l}^N \sum\limits_{i,j, p,q=1}^{n}
r^{ij,pq}(\nu_k, \nu_l){S}_{ij}^{(k)}{S}_{pq}^{(l)}
 +\sum\limits_{i,j, p,q=1}^{n} r_0^{ij,pq}(\nu_l,
 \nu_l){S}_{ij}^{(l)}{S}_{pq}^{(l)} + \sum\limits_{i,j=1}^{n} {c}_{ij}(\nu_l)S^{(l)}_{ij},
\end{gather*}
where
 $r_0^{ij,pq}(u,v)$ are the matrix elements of the regular
part of the $r$-matrix and ${c}_{ij}(\nu_l)$ play the role of the
components of external magnetic field.

\begin{Remark}\label{remark3}
 The brackets (\ref{comrelP}) possess the
following Casimir functions
\begin{equation*}
I_{1i}=\sum\limits_{i=1}^N \big(S^{(i)}_{11}+S^{(i)}_{22}\big), \qquad
I_{2i}=\sum\limits_{i=1}^N \big(S^{(i)}_{11}S^{(i)}_{22}-
S^{(i)}_{12}S^{(i)}_{21}\big), \qquad i=1,\dots,N.
\end{equation*}
\end{Remark}

\subsection{Quasi-trigonometric case}\label{section3.3} In this subsection we will
illustrate the material of the previous subsections by the
concrete example of $\mathfrak{gl}(2)\otimes \mathfrak{gl}(2)$-valued classical
$r$-matrices and the corresponding Gaudin models.
\subsubsection[Quasi-trigonometric r-matrices]{Quasi-trigonometric $\boldsymbol{r}$-matrices}\label{section3.3.1}

 Let us consider
the following $\mathfrak{gl}(2)\otimes \mathfrak{gl}(2)$-valued function of two complex
variables
\begin{gather}
r(u,v)=\left(\frac{1}{2}\frac{(u+v)}{(v-u)}+c_1\right)X_{11}\otimes X_{11}+
\left(\frac{1}{2}\frac{(u+v)}{(v-u)}+c_2\right)X_{22}\otimes X_{22}\nonumber\\
\hphantom{r(u,v)=}{} +
\frac{v}{(v-u)}X_{12}\otimes X_{21}+ \frac{u}{(v-u)}X_{21}\otimes
X_{12}.\label{qtrm}
\end{gather}
It is possible to show \cite{SkrJGP2010} that the function
(\ref{qtrm}) satisfies generalized classical Yang--Baxter equation
(\ref{GCYB}), i.e., is a classical $r$-matrix. Using the
trigonometric parametrization $u={\rm e}^{\phi}$, $v={\rm e}^{\psi}$ it is
easy to show that the $r$-matrix~(\ref{qtrm}) satisfies the
condition~(\ref{reg}).

We will call the $r$-matrix (\ref{qtrm}) to be {\it quasi-trigonometric}.

It is easy to show that the $r$-matrix (\ref{qtrm}) possess
 a~symmetry~(\ref{inv}) with respect to the Cartan-subalgebra of the diagonal
 matrices.

It is also easy to show that the shift element for the $r$-matrix
(\ref{qtrm}) satisfying the equation~(\ref{shrm}) has the
following form
\begin{equation}\label{qtshift}
c(u)=2(c_{11}X_{11} +c_{22}X_{22}),
\end{equation}
where the overall coefficient (two) is introduced for the further
notational convenience.

\begin{Remark}\label{remark4}
 In the case $c_1=0$, $c_2=0$ the $r$-matrix~(\ref{qtrm}) coincides with skew-symmetric trigonometric
$r$-matrix. In all other cases this $r$-matrix is not
skew-symmetric.
\end{Remark}

\begin{Remark}\label{remark5}
 In the present paper we will concentrate on the
one-parametric sub-family of the $r$-matrices (\ref{qtrm})
characterized by the property $c_2=-c_1$. This family evidently
includes a~skew-symmetric point $c_1=c_2=0$ as a special partial
case.
\end{Remark}

\subsubsection{Quasi-trigonometric Lax algebra and Gaudin-type-models}\label{section3.3.2}
The Lax algebra that correspond to the
$r$-matrix (\ref{qtrm}) usually possesses \cite{SkrJMP2016} the function $M$ associated with
the Cartan-invariance of the $r$-matrix (\ref{qtrm}), that satisfies the following equality
\begin{gather}\label{m3}
\{M,L_{ij}(u)\}=\operatorname{sign}(j-i) (1-\d_{ij}) L_{ij}(u).
\end{gather}
It is easy to show \cite{SkrJMP2016} that $M$ is an integral of
motion, i.e.,
\begin{equation*}
\big\{M,\operatorname{tr} L^k(u)\big\}=0, \qquad \forall\, k \in \mathbb{N}.
\end{equation*}
As we will show below this additional integral is important for a
separation of variables.

The Lax matrix of the Gaudin-type models in an external magnetic
field (\ref{qlopg}) corresponding to the $r$-matrix (\ref{qtrm})
has the following form
\begin{gather}
L(v)=\sum\limits_{i=1}^N
\left(\left(\frac{1}{2}\frac{\nu_i+v}{\nu_i-v}+c_1\right)S^{(i)}_{11}X_{11} +
\left(\frac{1}{2}\frac{\nu_i+v}{\nu_i-v}+c_2\right)S^{(i)}_{22}X_{22}\right.\nonumber\\
\left.\hphantom{L(v)=\sum\limits_{i=1}^N}{} +
\frac{v S^{(i)}_{12}}{(\nu_i-v)}X_{21}+ \frac{\nu_i
S^{(i)}_{21}}{(\nu_i-v)}X_{12}\right)+c(v),\label{laxNqtg}
\end{gather}
where $c(v)$ is given by the formula (\ref{qtshift}).

The Poisson-commuting Gaudin-type Hamiltonians in an external
magnetic field are
\[ H_{j}=\frac{1}{2} \operatorname{res}_{v=\nu_j} \operatorname{tr} L^2(v), \qquad j=1,\dots,N.\]
They have the following explicit form
\begin{gather}
H_{j}=\sum\limits_{i=1,i \neq j}^N
\left(\left(\frac{1}{2}\frac{\nu_i+\nu_j}{\nu_i-\nu_j}+c_1\right)S^{(i)}_{11}S^{(j)}_{11}
+
\left(\frac{1}{2}\frac{\nu_i+\nu_j}{\nu_i-\nu_j}+c_2\right)S^{(i)}_{22}S^{(j)}_{22}+
\frac{\nu_j S^{(i)}_{12}S^{(j)}_{21}}{(\nu_i-\nu_j)}\right.\nonumber\\
\left.\hphantom{H_{j}=}{}
+ \frac{\nu_i
S^{(i)}_{21}S^{(j)}_{12}}{(\nu_i-\nu_j)}\right)
+c_1 (S^{(j)}_{11})^2+ c_2(S^{(j)}_{22})^2+
2c_{11}S^{(j)}_{11} +2c_{22}S^{(j)}_{22}, \qquad j=1,\dots,N.\!\!\!\label{gmahj}
\end{gather}
There additional linear integral $M$ commutes with all the
functions $H_j$ and has the form
\[
M=\frac{1}{2}\sum\limits_{j=1}^N \big(S^{(j)}_{22}-S^{(j)}_{11}\big).
\]

\begin{Remark}\label{remark6}
Observe that the Hamiltonians (\ref{gmahj}) differ from the standard trigonometric Gaudin Hamiltonians by ``dynamical'', i.e., dependent on the integral $M$ and linear Casimir functions, magnetic fields \cite{SkrNPB2019}. By the other words, all the corresponding ``quasi-trigonometric'' models are very close to the trigonometric ones, but not equivalent to them, in particular because the corresponding quasi-trigonometric $r$-matrices are not equivalent to the trigonometric one.
\end{Remark}

\section{Separation of variables: quasi-trigonometric case}\label{section4}
Let us consider the case of the Lax-integrable models governed by
quasi-trigonometric $r$-mat\-rix~(\ref{qtrm}). It occurred that in
the case $c_2=-c_1$ for such the models there exist two
one-parametric families of separating functions $A(u)$, $B(u)$. We
will consider them one by one in the next subsections.
\subsection{The first family of separated variables}\label{section4.1}

Let us consider the following linear in the matrix
elements of the Lax matrix functions
 \begin{subequations}
\begin{gather}\label{btu1}
 B(u)= L_{12}(u)+ \frac{k}{u}(L_{11}(u)-L_{22}(u))-\frac{k^2}{u^2}L_{21}(u),
\\ \label{atu1}
A(u)= L_{22}(u) +\frac{k}{u}L_{21}(u)+\left(c_1+\frac{1}{2}\right)M,
\\ \label{ctu1}
 C(u)=
 L_{21}(u).
\end{gather}
 \end{subequations}
The following proposition holds true.
\begin{Proposition}
Let the $\mathfrak{gl}(2)$-valued Lax matrix $L(u)$ satisfy tensor Poisson
bracket \eqref{rmbr} with the $r$-matrix \eqref{qtrm}. Let
$c_2=-c_1$. Then the functions $B(u)$ and $A(u)$ satisfy the
following Poisson algebra
\begin{subequations}\label{brack1}
\begin{gather}\label{brbb}
 \{B(u),B(v)\}=\frac{k}{u v}(u B(u)-v B(v)),
\\ \label{brba}
 \{A(u),B(v)\}= \frac{1}{u-v} (u B(u)-v B(v)),
\\ \label{braa}
 \{A(u),A(v)\}=0.
\end{gather}
\end{subequations}
\end{Proposition}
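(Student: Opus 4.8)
The plan is to reduce all three identities \eqref{brbb}--\eqref{braa} to the elementary Poisson brackets of the matrix elements, since $B(u)$, $A(u)$ and $C(u)$ are linear in the $L_{ij}$ up to the $M$-shift present in $A$. First I would read off every $\{L_{ij}(u),L_{kl}(v)\}$ by projecting the tensor relation \eqref{rmbr}, evaluated on the explicit $r$-matrix \eqref{qtrm}, onto the basis vectors $X_{ij}\otimes X_{kl}$: writing $[r_{12}(u,v),L(u)\otimes 1]=\sum_{ab,cd}r^{ab,cd}(u,v)\,[X_{ab},L(u)]\otimes X_{cd}$ and the analogous $r_{21}$-term, and expanding with $[X_{ij},X_{kl}]=\delta_{jk}X_{il}-\delta_{il}X_{kj}$. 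The brackets with the geometric integral are supplied directly by \eqref{m3}. After this step every bracket in \eqref{brack1} becomes a finite bilinear sum in the $L_{ij}$ with rational coefficients in $u$, $v$ (and $k$, $c_1$).

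Next I would dispose of \eqref{brbb}, where $M$ plays no role: $\{B(u),B(v)\}$ is a pure bilinear combination of the $\{L_{ij}(u),L_{kl}(v)\}$ with the weights $1,\tfrac{k}{u},-\tfrac{k^2}{u^2}$ and their $v$-analogues. I expect this to collapse to $\tfrac{k}{uv}(uB(u)-vB(v))$; as a consistency check it must vanish at $k=0$, which it does since a short computation gives $\{L_{12}(u),L_{12}(v)\}=0$.

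The substantive part is \eqref{brba} and \eqref{braa}, where the $M$-shift is essential. From \eqref{m3} one has $\{M,L_{12}\}=L_{12}$, $\{M,L_{21}\}=-L_{21}$, $\{M,L_{11}\}=\{M,L_{22}\}=0$, hence $\{M,B(v)\}=L_{12}(v)+\tfrac{k^2}{v^2}L_{21}(v)$, which is conspicuously \emph{not} proportional to $B(v)$; likewise the $M$-cross terms in $\{A(u),A(v)\}$ equal $(c_1+\tfrac12)\bigl(\tfrac{k}{u}L_{21}(u)-\tfrac{k}{v}L_{21}(v)\bigr)$. The heart of the proof -- and the step I expect to be the main obstacle -- is to verify that the explicit $c_1$-terms carried by the pure brackets $\{L_{22}(u)+\tfrac{k}{u}L_{21}(u),\,B(v)\}$ and $\{L_{22}(u)+\tfrac{k}{u}L_{21}(u),\,L_{22}(v)+\tfrac{k}{v}L_{21}(v)\}$ (produced by the non-skew-symmetric diagonal part $(\tfrac12\tfrac{u+v}{v-u}+c_i)X_{ii}\otimes X_{ii}$ of \eqref{qtrm}) combine with the $c_1$-part of the $M$-contributions above to cancel identically. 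This cancellation is consistent precisely under the hypothesis $c_2=-c_1$ and for the shift value $c_1+\tfrac12$; it is what removes all explicit $c$-dependence and leaves $\{A,A\}=0$ together with a $c$-free expression for $\{A,B\}$.

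Finally I would re-collect the surviving $c$-free terms into the stated right-hand sides, the only delicate bookkeeping being the manipulation of the rational kernels $\tfrac{v}{v-u}$, $\tfrac{u}{v-u}$, $\tfrac{k}{uv}$; here the partial-fraction identities $\tfrac{u}{u-v}=1+\tfrac{v}{u-v}$ and $\tfrac{1}{uv}=\tfrac{1}{u-v}\bigl(\tfrac1v-\tfrac1u\bigr)$, together with the half-sum relation $\tfrac12\tfrac{u+v}{v-u}=\tfrac{v}{v-u}-\tfrac12$, let one rewrite each side in a common form and verify the collapse into $\tfrac{k}{uv}(uB(u)-vB(v))$ and $\tfrac{1}{u-v}(uB(u)-vB(v))$ monomial by monomial. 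The auxiliary function $C(u)=L_{21}(u)$ of \eqref{ctu1} serves to keep track of the $L_{21}$-proportional remainders that appear at the intermediate stages.
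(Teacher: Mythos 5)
Your proposal is correct and follows essentially the same route as the paper, whose proof consists of exactly the direct calculation you outline: expand $\{B(u),B(v)\}$, $\{A(u),B(v)\}$, $\{A(u),A(v)\}$ into the elementary brackets $\{L_{ij}(u),L_{kl}(v)\}$ read off from \eqref{rmbr} with the explicit $r$-matrix \eqref{qtrm}, use \eqref{m3} for the $M$-terms, and verify the collapse onto the right-hand sides. Your identification of the key cancellation --- that the $(c_1+\tfrac12)M$-shift in $A(u)$ produces exactly the $L_{21}$-proportional terms needed to offset the $c$-dependent contributions of the diagonal part of the $r$-matrix under the hypothesis $c_2=-c_1$ --- is the correct mechanism and is precisely what the paper's one-line ``direct calculation'' proof leaves implicit.
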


\begin{proof}
 The proposition is proven by the direct
calculation using the explicit form of the functions $A(u)$,
$B(v)$, the explicit form of the classical $r$-matrix, the Poisson
brackets (\ref{rmbr}) and the relation (\ref{m3}).
\end{proof}

Comparing the algebra (\ref{brack1}) with the algebra
(\ref{sepalg}) we obtain that the relations
(\ref{sepalg1})--(\ref{sepalg3}) are automatically satisfied and
the relation (\ref{sepalg4}) is satisfied after the transformation
$A(u)\rightarrow \frac{A(u)}{u}$. That is why introducing the
coordinates $x_j$, $p_j$ by the relations
\begin{equation*}
B(x_j)=0,\qquad p_j=\frac{A(x_j)}{x_j}
\end{equation*}
 we obtain that these coordinates are canonical
\begin{equation*}
\{x_i,x_j\}=0, \qquad \{p_i,p_j\}=0, \qquad \{x_j, p_i\}=\d_{ij}.
\end{equation*}

Hence we have constructed a set of canonical coordinates
associated with the quasi-tri\-go\-no\-met\-ric $r$-matrices. Now it is necessary
to show that they satisfy the equations of separation.

The following proposition holds true.
\begin{Proposition}
The functions $A(u)$, $B(u)$, $C(u)$ satisfy the following
algebraic relation
\begin{equation}\label{sepeqtrig1}
\det \left(L(u)-\left(A(u)-\left(c_1+\frac{1}{2}\right)M\right){\rm Id}\right)+B(u)C(u)=0.
\end{equation}
\end{Proposition}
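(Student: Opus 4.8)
The plan is to prove this by direct computation, exploiting the fact that \eqref{sepeqtrig1} is a purely algebraic identity in the matrix elements $L_{ij}(u)$, not requiring any Poisson structure. First I would write out the $2\times 2$ determinant explicitly. For a general $\mathfrak{gl}(2)$ matrix $L(u)$ with entries $L_{ij}(u)$, we have
\begin{equation*}
\det(L(u)-\lambda\,{\rm Id}) = \lambda^2 - \bigl(L_{11}(u)+L_{22}(u)\bigr)\lambda + \bigl(L_{11}(u)L_{22}(u)-L_{12}(u)L_{21}(u)\bigr),
\end{equation*}
and here the spectral parameter is shifted: $\lambda = A(u)-\bigl(c_1+\tfrac12\bigr)M = L_{22}(u)+\tfrac{k}{u}L_{21}(u)$, using the definition \eqref{atu1}. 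The key observation is that the troublesome $M$-term in \eqref{atu1} is subtracted off inside the determinant, so that the quantity appearing in the determinant is simply the ``reduced'' combination $\widetilde{A}(u) := L_{22}(u)+\tfrac{k}{u}L_{21}(u)$.

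Next I would substitute $\lambda = \widetilde{A}(u)$ into the characteristic polynomial and expand. The idea is that the determinant becomes a polynomial in $L_{22}(u)$, $L_{21}(u)$, $L_{11}(u)$, $L_{12}(u)$ and the ratio $k/u$. Separately, I would expand the product $B(u)C(u)$ using \eqref{btu1} and \eqref{ctu1}:
\begin{equation*}
B(u)C(u)=\Bigl(L_{12}(u)+\tfrac{k}{u}\bigl(L_{11}(u)-L_{22}(u)\bigr)-\tfrac{k^2}{u^2}L_{21}(u)\Bigr)L_{21}(u).
\end{equation*}
The plan is then to add $\det(L(u)-\widetilde{A}(u)\,{\rm Id})$ and $B(u)C(u)$ and check that everything cancels, collecting terms by powers of $k/u$.

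Concretely, I expect the cancellation to organize itself order by order in $k/u$. At order $(k/u)^0$, the determinant contributes $\widetilde{A}^2 - (L_{11}+L_{22})\widetilde{A} + (L_{11}L_{22}-L_{12}L_{21})$ evaluated with only the $L_{22}$ part of $\widetilde{A}$, which gives $L_{22}^2-(L_{11}+L_{22})L_{22}+L_{11}L_{22}-L_{12}L_{21}=-L_{12}L_{21}$; this is precisely cancelled by the order-zero piece $L_{12}L_{21}$ of $B(u)C(u)$. At order $(k/u)^1$, the determinant produces cross terms involving $L_{21}$ (from $2L_{22}\cdot\tfrac{k}{u}L_{21}$ and $-(L_{11}+L_{22})\cdot\tfrac{k}{u}L_{21}$), which should combine to $\tfrac{k}{u}L_{21}(L_{22}-L_{11})$, cancelling the matching term in $B(u)C(u)$. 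At order $(k/u)^2$, the determinant gives $\tfrac{k^2}{u^2}L_{21}^2$, cancelling the $-\tfrac{k^2}{u^2}L_{21}^2$ term in $B(u)C(u)$.

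The main ``obstacle'' here is not conceptual but bookkeeping: the whole statement is an algebraic identity that holds for \emph{any} $2\times 2$ matrix, so the only thing to verify is that the three powers of $k/u$ cancel correctly. The subtlety worth flagging is the role of the $M$-shift: the identity works precisely because the $\bigl(c_1+\tfrac12\bigr)M$ term is removed inside the determinant by the explicit subtraction in \eqref{sepeqtrig1}, so that the determinant sees only $\widetilde{A}(u)$ rather than the full momentum-generating function $A(u)$. This confirms that the equation of separation is the characteristic polynomial of $L$ evaluated at the $M$-modified eigenvalue $x_j p_j - \bigl(c_1+\tfrac12\bigr)M$, matching \eqref{sepeq0}: once $B(x_j)=0$, equation \eqref{sepeqtrig1} reduces to $\det\bigl(L(x_j)-\bigl(A(x_j)-\bigl(c_1+\tfrac12\bigr)M\bigr){\rm Id}\bigr)=0$, which with $p_j=A(x_j)/x_j$ becomes exactly \eqref{sepeq0}.
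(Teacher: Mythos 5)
Your computation is correct and is exactly the "direct calculation using the explicit form of $A(u)$, $B(u)$, $C(u)$" that the paper invokes: the $M$-shift cancels inside the determinant, and the characteristic polynomial evaluated at $L_{22}(u)+\tfrac{k}{u}L_{21}(u)$ equals $-B(u)C(u)$ order by order in $k/u$. No gaps; this matches the paper's proof.
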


\begin{proof}
The proposition is proven by the
direct calculation
using the explicit form of the functions $A(u)$, $B(v)$, $C(u)$.
\end{proof}

From the relation (\ref{sepeqtrig1}) follows that we have obtained
the equations of separation for the trigonometric models which can
be written more explicitly as follows
\begin{equation}\label{sepeqtrig1'}
\det \left(L(x_j)-\left(x_jp_j-\left(c_1+\frac{1}{2}\right)M\right){\rm Id}\right)=0.
\end{equation}

Now in order to state that the obtained coordinates are
the separated coordinates indeed, it is necessary to show that the
set of the constructed coordinates is complete on symplectic leaves
of the Lie--Poisson brackets. The following proposition holds
true.
\begin{Proposition}
Let $L(u)$ be the Lax matrix of the $N$-spin Gaudin model~\eqref{laxNqtg}. Let $k\neq 0$. Then the function $B(u)$ given
by the formula~\eqref{btu1} possesses $N$ non-constant zeros.
\end{Proposition}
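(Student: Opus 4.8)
The plan is to substitute the explicit Lax matrix \eqref{laxNqtg} into the definition \eqref{btu1} of $B(u)$ and count the non-constant zeros. First I would write out the four matrix elements $L_{ij}(u)$ of the $N$-spin Gaudin Lax matrix, separating the dynamical residue terms from the constant shift $c(u)=2(c_{11}X_{11}+c_{22}X_{22})$. The key observation is that every term in $L(u)$ carries a factor of the form $\frac{1}{\nu_i-u}$ times a numerator that is a polynomial in $u$ of degree at most one (coming from the $v$ and $\nu_i$ factors in the off-diagonal entries and the $\tfrac{1}{2}\frac{\nu_i+u}{\nu_i-u}$ in the diagonal ones). Hence the natural strategy is to clear denominators: multiply $B(u)$ by the common factor $u^2\prod_{i=1}^N(\nu_i-u)$ and show that the result is a polynomial in $u$ whose degree and leading behaviour I can control.

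Next I would examine the structure of $u^2 B(u)$ after this clearing. The three contributions are $u^2 L_{12}(u)$, the term $ku\,(L_{11}(u)-L_{22}(u))$, and $-k^2 L_{21}(u)$. Each $L_{ij}(u)$, once multiplied by $\prod_i(\nu_i-u)$, becomes a polynomial of degree $N$ in $u$; the powers of $u$ in front ($u^2$, $u$, and $1$ respectively) together with the explicit $u$-dependence of the numerators determine the total degree of the polynomial $\widetilde B(u):=u^2\prod_{i=1}^N(\nu_i-u)B(u)$. The aim is to show that $\widetilde B(u)$ has degree exactly $N+2$, with the two ``spurious'' roots at $u=0$ (a double root introduced by the $u^2$ factor multiplying $L_{12}$, which is itself regular at $u=0$) accounting for the $u^2$ I introduced, so that precisely $N$ genuine, non-constant, dynamically-determined zeros remain. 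The role of the hypothesis $k\neq0$ is essential here: if $k=0$ then $B(u)=L_{12}(u)$, the terms that raise the degree disappear, and the zero count collapses; so I would track carefully where $k$ and $k^2$ enter the top-degree coefficients to confirm that $k\neq0$ guarantees the expected degree is attained rather than accidentally dropping.

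Concretely, I would compute the coefficient of the top power of $u$ in $\widetilde B(u)$ and show it is a nonzero (generically, or identically) function of the dynamical variables and of $k$, and likewise verify that the factor $u^2$ is exactly cancelled — i.e.\ that $\widetilde B(u)/u^2$ is regular and nonvanishing at $u=0$, so no zeros are lost or gained at the origin. The two zeros at $u=0$ and the behaviour at $u=\infty$ should be checked to be non-dynamical (coming from the shift $c(u)$ and the rational prefactors), leaving the remaining $N$ roots as the non-constant, Poisson-commuting coordinates $x_j$. These $N$ zeros are exactly the coordinates needed to match the degrees of freedom of the $N$-spin model, which is the point of the completeness claim that motivates the statement.

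The main obstacle I anticipate is the bookkeeping of leading coefficients: one must rule out accidental cancellations that would lower the degree of $\widetilde B(u)$ below $N+2$, and one must confirm that the two roots absorbed by the $u^2$ factor are genuinely non-dynamical rather than coinciding with the dynamical ones. In particular the interplay between the diagonal difference $L_{11}(u)-L_{22}(u)$ (which contains the constant shift $2(c_{11}-c_{22})$ and the Casimir-like combinations $S^{(i)}_{11}-S^{(i)}_{22}$) and the off-diagonal pieces must be handled so that the highest and lowest order terms are shown to be nonzero for $k\neq0$; this is the step where the hypothesis $k\neq0$ does the real work and where a careless degree count could go wrong.
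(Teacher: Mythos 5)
Your overall strategy --- substitute the explicit Gaudin Lax matrix \eqref{laxNqtg} into \eqref{btu1}, clear denominators, and count the degree and the order of vanishing at $u=0$ of the resulting polynomial, with $k\neq 0$ preventing a root from degenerating to a constant --- is exactly the paper's (very terse) argument. However, two of your concrete bookkeeping claims are wrong, and since the entire content of this proof is that bookkeeping, they matter. First, not every $L_{ij}(u)\prod_{i=1}^N(\nu_i-u)$ has degree $N$: from \eqref{laxNqtg} one has $L_{12}(u)=\sum_i \nu_i S^{(i)}_{21}/(\nu_i-u)$, which decays like $1/u$ at infinity, so $L_{12}(u)\prod_i(\nu_i-u)$ has degree $N-1$ and $u^2L_{12}(u)\prod_i(\nu_i-u)$ has degree $N+1$, not $N+2$; the other two contributions are also of degree at most $N+1$, so $\deg\widetilde B=N+1$. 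Second, $\widetilde B$ has only a \emph{simple} zero at $u=0$, not a double one: $L_{21}(u)=\sum_i u S^{(i)}_{12}/(\nu_i-u)$ already vanishes at $u=0$, so $\frac{k^2}{u^2}L_{21}(u)$ has only a first-order pole there, and $\frac{k}{u}\big(L_{11}(u)-L_{22}(u)\big)$ likewise contributes to $\widetilde B$ at order $u^1$ with coefficient proportional to $k$. Your two errors happen to cancel in the final count ($N+2-2=N+1-1=N$), but the intermediate claims as stated would not survive the explicit computation you defer.

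The clean version, and the one the paper actually uses, clears only the minimal denominator $u\prod_{i=1}^N(\nu_i-u)$: the numerator is then a polynomial of degree $N$ whose constant term equals $\prod_i\nu_i\cdot\big(k(L_{11}(0)-L_{22}(0))-k^2\sum_i S^{(i)}_{12}/\nu_i\big)$, hence is (generically) nonzero precisely when $k\neq0$; thus $u=0$ is not a root and all $N$ roots are non-constant, whereas for $k=0$ one root sits at $u=0$ and only $N-1$ dynamical zeros survive (cf.\ Remark~\ref{remark12} for $N=2$). You should also state explicitly, rather than only gesture at, that the leading coefficient is a generically nonvanishing linear function of the dynamical variables, so the degree does not drop. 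Finally, the paper's proof closes by invoking the $2N$-dimensionality of the generic symplectic leaves of $\mathfrak{gl}(2)^{\oplus N}$ to conclude completeness of the resulting canonical coordinates; your last sentence covers this point adequately.
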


\begin{proof}
The proposition is proven by a direct
calculation. Using the explicit form (\ref{btu1}) of the function
$B(u)$ and the Lax matrix (\ref{laxNqtg}) it is easy to show that
it is a rational function in $u$, nominator of which is a
polynomial of degree $N$ whose constant in $u$ term is not zero if
$k\neq 0$.

Now, using the fact that symplectic leafs of the Lie--Poisson
brackets in $\mathfrak{gl}(2)^{\oplus N}$ are $2N$-dimensional we obtain that
our construction produce the needed number of the canonical
coordinates satisfying the equations of separation
(\ref{sepeqtrig1'}), i.e., produce a complete set of the
coordinates of separation.
\end{proof}

\begin{Remark}\label{remark7}
 Observe that the constructed set of the separated coordinates is
complete even without external magnetic field, i.e., when $c_{11}=c_{22}=0$.
\end{Remark}

 \begin{Remark}\label{remark8}
 Observe, that $B(u)= L_{12}^{K}(u)$,
where $L^K(u)=K(u)^{-1} L(u) K(u)$ and
\[ K(u)=\left(
\begin{matrix}
 1 & \dfrac{k}{u} \vspace{1mm}\\
 0 & 1
\end{matrix}
\right).
\] The element $A(u)$ coincides with $L_{22}^{K}(u)$ only
in the case $c_1=-\frac{1}{2}$. Consequently the equations of
separation~(\ref{sepeqtrig1'}) do not coincide with the spectral
curve of the Lax matrix if $c_1\neq -\frac{1}{2}$.
\end{Remark}

\subsection{The second family of separated variables}\label{section4.2} Let us
consider the following linear in the matrix elements of the Lax
matrix functions
 \begin{subequations}
\begin{gather}\label{btu2}
 B(u)= L_{12}(u)+ {k}(L_{11}(u)-L_{22}(u))-{k^2}L_{21}(u),
\\ \label{atu2}
A(u)= L_{11}(u) -{k}L_{21}(u)+\left(c_1+\frac{1}{2}\right)M,
\\ \label{ctu2}
 C(u)=
 L_{21}(u).
\end{gather}
 \end{subequations}
The following proposition holds true.
\begin{Proposition}
Let the $\mathfrak{gl}(2)$-valued Lax matrix $L(u)$ satisfy tensor Poisson
bracket~\eqref{rmbr} with the $r$-matrix \eqref{qtrm}. Let
$c_2=-c_1$. Then the functions $B(u)$ and $A(u)$ satisfy the
following Poisson algebra:
\begin{subequations}\label{brack2}
\begin{gather}\label{brbb2}
 \{B(u),B(v)\}=-{k} (B(u)-B(v)),
\\ \label{brba2}
 \{A(u),B(v)\}= -\frac{u}{u-v} (B(u)-B(v)),
\\ \label{braa2}
 \{A(u),A(v)\}=0.
\end{gather}
\end{subequations}
\end{Proposition}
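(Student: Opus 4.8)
The plan is to reduce all three brackets to the elementary Poisson brackets among the matrix entries $L_{ij}(u)$ and then substitute the combinations (\ref{btu2})--(\ref{ctu2}). First I would extract $\{L_{ij}(u),L_{kl}(v)\}$ from the tensor relation (\ref{rmbr}): expanding both sides in the basis $X_{ij}\otimes X_{kl}$, using $[X_{ij},X_{kl}]=\d_{jk}X_{il}-\d_{li}X_{kj}$ and the four explicit terms of the $r$-matrix (\ref{qtrm}) specialised to $c_2=-c_1$, yields a closed table of entrywise brackets whose only nontrivial ingredients are the simple pole $1/(v-u)$ and the constant $c_1$. Alongside this I would record, from (\ref{m3}), that $M$ acts as a weight derivation: $\{M,L_{12}\}=L_{12}$, $\{M,L_{21}\}=-L_{21}$ and $\{M,L_{11}\}=\{M,L_{22}\}=0$. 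Since $A(u)$ carries the shift $(c_1+\frac12)M$, this $M$-contribution has to be tracked separately in $\{A(u),B(v)\}$ and $\{A(u),A(v)\}$; it is precisely the term that forces the algebra to close on $B$.

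A useful structural shortcut is that the second family is generated by a \emph{constant} gauge $K=\left(\begin{smallmatrix}1 & k\\ 0 & 1\end{smallmatrix}\right)$: writing $L^{K}=K^{-1}LK$ one checks directly that $B(u)=L^{K}_{12}(u)$, $A(u)-(c_1+\frac12)M=L^{K}_{11}(u)$ and $C(u)=L^{K}_{21}(u)$. Because $K$ does not depend on $u$, the brackets among the $L^{K}_{ij}$ are governed by the single gauged $r$-matrix $r^{K}=(K^{-1}\otimes K^{-1})\,r\,(K\otimes K)$, so computing $r^{K}$ once organises all three brackets simultaneously and explains why no $1/u$ factors appear here (contrast the first family, whose gauge carries $k/u$).

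I would then evaluate the three brackets in turn. For $\{B(u),B(v)\}$ the $1/(v-u)$ poles should cancel, leaving the purely constant coefficient of (\ref{brbb2}), namely $-k(B(u)-B(v))$. For $\{A(u),A(v)\}$ the $L_{11}$--$L_{11}$, $L_{11}$--$L_{21}$ and $L_{21}$--$L_{21}$ pieces together with the $M$-terms must cancel to give (\ref{braa2}). The delicate case is $\{A(u),B(v)\}$: here the $\frac{u}{v-u}$ and $\frac{v}{v-u}$ contributions coming from the two off-diagonal terms of (\ref{qtrm}), plus the $M$-shift acting on the $L_{12}$ and $L_{21}$ parts of $B$, have to recombine into the asymmetric coefficient $-\frac{u}{u-v}$ of (\ref{brba2}). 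Getting this sign and pole exactly right, and confirming that the residual terms reassemble into $B(u)$ and $B(v)$ rather than into the bare entries $L_{ij}$, is the main obstacle; an error in this bookkeeping would spoil the closure of the algebra on $B$ and hence the whole separation scheme.
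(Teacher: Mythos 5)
Your proposal is correct and takes essentially the same route as the paper, which proves the proposition by direct calculation from the explicit forms of $A(u)$, $B(v)$, the $r$-matrix \eqref{qtrm}, the tensor bracket \eqref{rmbr} and the relation \eqref{m3}; your observation that the constant gauge $K$ organises the computation is the content of the paper's own Remark~\ref{remark10} rather than a genuinely different argument.
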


\begin{proof}
 The proposition is proven by the direct
calculation using the explicit form of the functions $A(u)$,
$B(v)$, the explicit form of the classical $r$-matrix, the Poisson
brackets (\ref{rmbr}) and the relation (\ref{m3}).
\end{proof}

Comparing the algebra (\ref{brack2}) with the algebra (\ref{sepalg}) we obtain that the relations
(\ref{sepalg1})--(\ref{sepalg3}) are automatically satisfied and
the relation (\ref{sepalg4}) is satisfied after the transformation
$A(u)\rightarrow -\frac{A(u)}{u}$. That is why introducing the
coordinates $x_j$, $p_j$ by the relations
\begin{equation*}
B(x_j)=0,\qquad p_j=-\frac{A(x_j)}{x_j}
\end{equation*}
 we obtain that these coordinates are canonical
\begin{equation*}
\{x_i,x_j\}=0, \qquad \{p_i,p_j\}=0, \qquad \{x_j, p_i\}=\d_{ij}.
\end{equation*}

Hence we have constructed a set of canonical coordinates
associated with the quasi-tri\-go\-no\-met\-ric $r$-matrices. Now it is necessary
to show that they satisfy some equations of separation.

The following proposition holds true.
\begin{Proposition}
The functions $A(u)$, $B(u)$, $C(u)$ satisfy the following
algebraic relation
\begin{equation}\label{sepeqtrig2}
\det \left(L(u)-\left(A(u)-\left(c_1+\frac{1}{2}\right)M\right){\rm Id}\right)+B(u)C(u)=0.
\end{equation}
\end{Proposition}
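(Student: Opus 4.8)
The plan is to treat \eqref{sepeqtrig2} as a purely algebraic identity among the scalar functions $L_{ij}(u)$. Unlike the preceding Poisson-algebra statements, this relation contains no brackets at all, so the $L_{ij}(u)$ may be regarded simply as commuting functions and no use is made of \eqref{rmbr} or \eqref{m3}. The first thing I would do is eliminate the geometric integral $M$: by definition \eqref{atu2} one has $A(u)-\left(c_1+\frac{1}{2}\right)M=L_{11}(u)-k L_{21}(u)$, so the shifted argument of the determinant is the purely matrix-element expression $\tilde A(u):=L_{11}(u)-k L_{21}(u)$, and \eqref{sepeqtrig2} reduces to $\det\big(L(u)-\tilde A(u)\,{\rm Id}\big)+B(u)C(u)=0$, which is free of $M$.

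Next I would expand the $2\times2$ determinant via $\det(L-\lambda\,{\rm Id})=(L_{11}-\lambda)(L_{22}-\lambda)-L_{12}L_{21}$ and substitute $\lambda=\tilde A(u)$. Since $L_{11}-\tilde A=k L_{21}$ and $L_{22}-\tilde A=(L_{22}-L_{11})+k L_{21}$, the determinant becomes $k L_{21}\big[(L_{22}-L_{11})+k L_{21}\big]-L_{12}L_{21}$. I would then expand $B(u)C(u)=\big[L_{12}+k(L_{11}-L_{22})-k^2 L_{21}\big]L_{21}=L_{12}L_{21}+k(L_{11}-L_{22})L_{21}-k^2 L_{21}^2$ and add the two pieces. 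The $-L_{12}L_{21}$ from the determinant cancels the $+L_{12}L_{21}$ from $BC$; the two contributions $k L_{21}(L_{22}-L_{11})$ and $k(L_{11}-L_{22})L_{21}$ are equal and opposite; and the $\pm k^2 L_{21}^2$ terms cancel. Everything vanishes identically, which establishes the claim.

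There is essentially no hard step here: once $M$ is absorbed into $\tilde A(u)$ the identity holds term by term. The only point genuinely worth checking is precisely that the coefficient $\left(c_1+\frac{1}{2}\right)$ multiplying $M$ in \eqref{atu2} matches exactly the coefficient subtracted inside the determinant in \eqref{sepeqtrig2}, so that $M$ drops out cleanly and \eqref{sepeqtrig2} is a bona fide polynomial identity in the $L_{ij}(u)$ rather than merely a relation modulo $M$. This cancellation mirrors the one already encountered for the first family in \eqref{sepeqtrig1}, where the analogous computation goes through with $\tilde A(u)=L_{22}(u)+\frac{k}{u}L_{21}(u)$; the structural role of $B(u)$ as the off-diagonal entry of a gauge-transformed Lax matrix (cf.\ Remark~\ref{remark8}) is what makes the choice of $\tilde A(u)$ an eigenvalue of $L(u)$ up to the correction $B(u)C(u)$.
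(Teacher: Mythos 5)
Your proof is correct and is precisely the ``direct calculation using the explicit form of $A(u)$, $B(u)$, $C(u)$'' that the paper invokes: after substituting $A(u)-\big(c_1+\tfrac{1}{2}\big)M=L_{11}(u)-kL_{21}(u)$ the $2\times 2$ determinant and the product $B(u)C(u)$ cancel term by term. You have simply written out the computation the paper leaves implicit.
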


\begin{proof}
The proposition is proven by the direct calculation using the explicit form of the functions $A(u)$, $B(v)$, $C(u)$.
\end{proof}

From the relation (\ref{sepeqtrig2}) it follows that we have
obtained new equations of separation for the trigonometric models
which can be written explicitly as follows
\begin{equation}\label{sepeqtrig2'}
\det \left(L(x_j)+\left(x_jp_j+\left(c_1+\frac{1}{2}\right)M\right){\rm Id}\right)=0.
\end{equation}

Now, in order to state that the obtained coordinates are indeed
the separated coordinates it is necessary only to show that the
set of the constructed coordinates is complete on symplectic
leafs of the Lie--Poisson brackets. The following proposition
holds true.
\begin{Proposition}
Let $L(u)$ be the Lax matrix of the $N$-spin Gaudin model
\eqref{laxNqtg}. Let $k\neq 0$. Then the function $B(u)$ given
by the formula \eqref{btu1} possesses $N$ non-constant zeros.
\end{Proposition}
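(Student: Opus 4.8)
The plan is to substitute the explicit Gaudin Lax matrix \eqref{laxNqtg} into the definition \eqref{btu1} of $B(u)$ and to study the resulting rational function of $u$ directly. For that Lax matrix one has $L_{12}(u)=\sum_{i=1}^{N}\frac{\nu_i S^{(i)}_{21}}{\nu_i-u}$, $L_{21}(u)=u\sum_{i=1}^{N}\frac{S^{(i)}_{12}}{\nu_i-u}$, while $L_{11}(u)-L_{22}(u)$ (after imposing $c_2=-c_1$) is the sum of a term with simple poles at $u=\nu_i$ and the constant magnetic-field contribution $2(c_{11}-c_{22})$. Thus the three summands of \eqref{btu1} combine into a rational function whose only singularities are simple poles at $u=\nu_1,\dots,\nu_N$ together with a pole at $u=0$ generated by the factors $k/u$ and $k^2/u^2$. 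A first point I would verify is that the seemingly second-order pole at $u=0$ coming from $-\frac{k^2}{u^2}L_{21}(u)$ is in fact only simple, because $L_{21}(u)$ itself carries a compensating factor of $u$. Hence, working at generic nonzero marked points $\nu_i$, the natural common denominator is $u\prod_{i=1}^{N}(\nu_i-u)$, and multiplying $B(u)$ by it produces a polynomial $\tilde B(u)$ of degree at most $N$ whose coefficients are linear in the dynamical variables and in the magnetic field.

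Next I would fix the degree by extracting the coefficient of $u^{N}$ in $\tilde B(u)$. Collecting the top-degree contribution of each term, and using $\sum_i\big(S^{(i)}_{11}-S^{(i)}_{22}\big)=-2M$, this coefficient turns out to be proportional to $-\sum_i\nu_i S^{(i)}_{21}+k\big(M+c_1\sum_i\big(S^{(i)}_{11}+S^{(i)}_{22}\big)+2(c_{11}-c_{22})\big)$, which is not identically zero as a function on the symplectic leaf. Therefore $\tilde B(u)$ generically has degree exactly $N$ and hence $N$ roots counted with multiplicity; since its coefficients are genuine dynamical quantities, these roots are non-constant functions on the leaf rather than fixed points.

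The decisive step — and the real content of the hypothesis $k\neq 0$ — is the evaluation of the constant term $\tilde B(0)$. Here I would use $\tilde B(0)=\big(\prod_{j=1}^{N}\nu_j\big)\lim_{u\to 0}uB(u)$ together with $\lim_{u\to 0}uB(u)=k\big(L_{11}(0)-L_{22}(0)\big)-k^2\sum_i\frac{S^{(i)}_{12}}{\nu_i}$, which gives
\[
\tilde B(0)=k\bigg(\prod_{j=1}^{N}\nu_j\bigg)\big(L_{11}(0)-L_{22}(0)\big)-k^{2}\sum_{i=1}^{N} S^{(i)}_{12}\prod_{j\neq i}\nu_j .
\]
Both terms carry an explicit factor $k$, so $\tilde B(0)=k\,\Phi$ with $\Phi$ a function of the dynamical variables that is not identically zero. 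Consequently $\tilde B(0)\neq 0$ on a dense open subset of the leaf precisely when $k\neq 0$, so that $u=0$ is not a root of $\tilde B(u)$ and the pole of $B(u)$ at $u=0$ survives uncancelled. This is exactly what separates the present situation from $k=0$, in which $B(u)=L_{12}(u)$ has numerator of degree $N-1$ and only $N-1$ zeros: it is the extra simple pole at $u=0$ supplied by the $k$-terms that raises the zero count by one.

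Assembling these facts, for $k\neq 0$ the numerator $\tilde B(u)$ is a degree-$N$ polynomial with nonzero constant term, so all $N$ of its roots differ from $u=0$; a parallel check that $\tilde B(\nu_i)$ — which equals, up to the nonzero factor $\nu_i\prod_{j\neq i}(\nu_j-\nu_i)$, the pole coefficient $\nu_i S^{(i)}_{21}+k\big(S^{(i)}_{11}-S^{(i)}_{22}\big)-k^{2}S^{(i)}_{12}/\nu_i$ of $B$ at $u=\nu_i$ — does not vanish generically confirms that none of the roots is cancelled by the denominator either. Hence $B(u)$ possesses exactly $N$ non-constant zeros, matching half the dimension $2N$ of the generic symplectic leaf. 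I expect the only genuine subtlety, rather than a true obstacle, to be the bookkeeping of the $u=0$ pole: one must confirm that it is simple and that the associated constant term $\tilde B(0)$ is proportional to $k$, since this single observation is what makes $k\neq 0$ both necessary and sufficient for the count to reach $N$.
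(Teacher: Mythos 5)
Your proposal is correct and follows essentially the same route as the paper's own (very terse) proof: clear the denominator $u\prod_{i=1}^N(\nu_i-u)$, observe that the numerator is a polynomial of degree $N$ in $u$, and note that its constant term is proportional to $k$, so that for $k\neq 0$ it is generically nonzero and the root $u=0$ of the $k=0$ case is replaced by an $N$-th dynamical root. Your version merely supplies the details the paper leaves implicit — the explicit leading coefficient $-\sum_i\nu_i S^{(i)}_{21}+k\bigl(M+c_1\sum_i\bigl(S^{(i)}_{11}+S^{(i)}_{22}\bigr)+2(c_{11}-c_{22})\bigr)$, the value $\tilde B(0)$, and the residue check at $u=\nu_i$ — all of which I verified to be consistent with the Lax matrix \eqref{laxNqtg} (and with the printed $N=2$ formulas, up to what appear to be sign typos in the constant-term line of Section~\ref{section5.2.1}).
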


\begin{proof}
 The proposition is proven by a direct
calculation. Using the explicit form of the function~$B(u)$ and
the Lax matrix~(\ref{laxNqtg}) it is easy to show that it is a
rational function in $u$, nominator of which is a polynomial of
degree $N$ whose higher order term is non-zero if $k\neq 0$.

Now, using the fact that symplectic leafs of the Lie--Poisson
brackets in $\mathfrak{gl}(2)^{\oplus N}$ are $2N$-dimensional we obtain that
our construction produces the needed number of the canonical
coordinates satisfying the equations of separation~(\ref{sepeqtrig2'}), i.e., produces a complete set of the
coordinates of separation.
\end{proof}

\begin{Remark}\label{remark9}
Observe that, similar to the first set of separated variables, the constructed second set of separated variables is
complete even without external magnetic field.
\end{Remark}

\begin{Remark}\label{remark10}
Observe, that $B(u)= L_{12}^{K}(u)$,
where $L^K(u)=K^{-1} L(u) K$ and \[ K=\left(
\begin{matrix}
 1 & k \\
 0 & 1
\end{matrix}
\right).\] The element $A(u)$ coincide with $L_{11}^{K}(u)$ only
in the case $c_1=-\frac{1}{2}$. Consequently the equations of
separation~(\ref{sepeqtrig2'}) do not coincide with the spectral
curve of the Lax matrix unless $c_1=-\frac{1}{2}$.
\end{Remark}

\section[Example: N=2 quasi-trigonometric Gaudin model]{Example: $\boldsymbol{N=2}$ quasi-trigonometric Gaudin model}\label{section5}
In this section we will illustrate the results of the previous
section by the example of $N=2$ quasi-trigonometric Gaudin models.

\subsection[N=2 quasi-trigonometric Gaudin model]{$\boldsymbol{N=2}$ quasi-trigonometric Gaudin model}\label{section5.1}

Let us consider the Hamiltonians and Lax operators of $N=2$
Gaudin-type models more explicitly. Let $S^{(m)}_{ij}$, $i,j,m=1,2$ be coordinate functions on $(\mathfrak{gl}(2)\oplus \mathfrak{gl}(2))^*$
with the standard Lie--Poisson brackets~(\ref{comrelP}). Here-after
for the further notational convenience we will use the following new
notations
\[
S^{(1)}_{ij}\equiv S_{ij}, \qquad S^{(2)}_{ij}\equiv T_{ij}.
\]
The above Poisson brackets have two linear Casimir functions
\[
 I_{11}= S_{11}+S_{22}, \qquad I_{12}= T_{11}+T_{22}
\]
and two quadratic ones
\[
I_{21} = S_{11}S_{22}-S_{12}S_{21}, \qquad I_{22} = T_{11}T_{22}-T_{12}T_{21}.
\]

The Lax operator $L(v)$ of two-spin quasi-trigonometric Gaudin
model in an external magnetic field is written as follows
\begin{subequations}\label{lax2qtg}
\begin{gather}
L(v)=L^S(v)+L^T(v)+c(v),
\\
L^S(v)=\left(\frac{1}{2}\frac{\nu_1+v}{\nu_1-v}+c_1\right)S_{11}X_{11} +
\left(\frac{1}{2}\frac{\nu_1+v}{\nu_1-v}+c_2\right)S_{22}X_{22}\nonumber\\
\hphantom{L^S(v)=}{} +
\frac{vS_{12}}{(\nu_1-v)}X_{21}+
\frac{\nu_1S_{21}}{(\nu_1-v)}X_{12},
\\
L^T(v)=\left(\frac{1}{2}\frac{\nu_2+v}{\nu_2-v}+c_1\right)T_{11}X_{11} +
\left(\frac{1}{2}\frac{\nu_2+v}{\nu_2-v}+c_2\right)T_{22}X_{22}\nonumber\\
\hphantom{L^T(v)=}{} + \frac{vT_{12}}{(\nu_2-v)}X_{21}+ \frac{\nu_2T_{21}}{(\nu_2-v)}X_{12},\\
c(v)=2c_{11}X_{11} +2c_{22}X_{22}.
\end{gather}
\end{subequations}

The corresponding Gaudin-type Hamiltonians in a magnetic field have the following form
\begin{gather*}
H_1 =
\left(\frac{1}{2}\frac{(\nu_1+\nu_2)}{(\nu_2-\nu_1)}+c_1\right)T_{11}S_{11}+\left(\frac{1}{2}\frac{(\nu_1+\nu_2)}{(\nu_2-\nu_1)}+c_2\right)T_{22}S_{22}\\
\hphantom{H_1 =}{} + \frac{\nu_1T_{12}S_{21}}{(\nu_2-\nu_1)}+\frac{\nu_2T_{21}S_{12}}{(\nu_2-\nu_1)}+c_1 S^2_{11}+c_2 S^2_{22}+2c_{11}S_{11}+2c_{22}S_{22},
\\
H_2 =
\left(\frac{1}{2}\frac{(\nu_1+\nu_2)}{(\nu_1-\nu_2)}+c_1\right)S_{11}T_{11}+\left(\frac{1}{2}\frac{(\nu_1+\nu_2)}{(\nu_1-\nu_2)}+c_2\right)S_{22}T_{22}\\
\hphantom{H_2 =}{} +\frac{\nu_2S_{12}T_{21}}{(\nu_1-\nu_2)}
+\frac{\nu_1S_{21}T_{12}}{(\nu_1-\nu_2)}+c_1 T^2_{11} +c_2 T^2_{22}+2c_{11}T_{11}+2c_{22}T_{22}.
\end{gather*}
The additional linear integral $M$ associated with the
Cartan-symmetry of the $r$-matrix is
\begin{equation*}
M = -\frac{1}{2}(S_{11}-S_{22}+T_{11}-T_{22}).
\end{equation*}
It is not independent -- it is related with $H_1$, $H_2$ and linear
Casimir functions as follows
\begin{gather*}
H_1+H_2+2c_1(I_{11}+I_{12})M+c_{11}(2M-I_{11}-I_{12})-c_{22}(2M+I_{11}+I_{12})\\
\qquad{} -\left(\frac{1}{4}(c_1+c_2)\right)(2M+I_{11}+I_{12})^2=0.
\end{gather*}
 In the interesting for us case $c_2=-c_1$ we obtain that
\[
M
=\frac{1}{2}\frac{((c_{11}+c_{22})I_{11}+(c_{11}+c_{22})I_{12}-H_1-H_2)}{((I_{11}+I_{12})c_1+c_{11}-c_{22})}.
\]

\begin{Remark}\label{remark11}
 Hereafter we will consider the case of the
$r$-matrices with $c_2=-c_1$ only. For the purpose of convenience
will chose the following poles of the Lax matrix
\[ \nu_2=-\nu_1.\]
In order to simplify all the formulae we will restrict
ourselves to the consideration of the traceless case, i.e., we will hereafter
put
$I_{11}=0$, $I_{12}=0$.
\end{Remark}

\subsection{The first family of separated variables}\label{sov1}
\subsubsection{The separating functions}\label{section5.2.1}
The first family of the separated variables are given by the
separating functions (\ref{btu1})--(\ref{atu1}). Let us specify
them for the Lax matrix (\ref{lax2qtg}). We will have
\begin{gather*}
B(u)=\frac{1}{2u (u-\nu_1)(u+\nu_1)} \bigl(
\bigl(k(2c_1-1)S_{11}-2\nu_1S_{21}+k(2c_1+1)S_{22}+
k(2c_1-1)T_{11}\\
\hphantom{B(u)=}{}
+2T_{21}\nu_1+k(2c_1+1)T_{22}+4k(-c_{22}+c_{11})\bigr)u^2 +
\bigl(2k^2T_{12}+2S_{12}k^2\\
\hphantom{B(u)=}{}
+2k \nu_1S_{22}+2k \nu_1T_{11}-2k
\nu_1T_{22}-2\nu_1^2S_{21} -2\nu_1^2T_{21}-2k \nu_1S_{11}\bigr)u\\
\hphantom{B(u)=}{}
- k\bigl(\nu_1^2(2c_1+1)S_{11} +2\nu_1S_{12}k-\nu_1^2(2c_1-1)S_{22}-\nu_1^2(2c_1+1)T_{11}
-2\nu_1T_{12}k\\
\hphantom{B(u)=}{}
-\nu_1^2(2c_1-1)T_{22}-
4\nu_1^2(-c_{22}+c_{11})\bigr)\bigr),
\\
A(u) =
\frac{1}{4(u^2-\nu_1^2)}\bigl(\bigl({-}(1+2c_1)S_{11}-(1+2c_1)S_{22}
-(1+2c_1)T_{11}-(1+2c_1)T_{22}\\
\hphantom{A(u) =}{}
+8c_{22}\bigr)u^2+\bigl({-}4S_{12}k+4T_{22}\nu_1-4T_{12}k-4S_{22}\nu_1\bigr)u+
\nu_1^2(2c_1+1)S_{11}-
4kS_{12}\nu_1\\
\hphantom{A(u) =}{}
+\nu_1^2(2c_1-3)S_{22}+\nu_1^2(2c_1+1)T_{11} +4kT_{12}\nu_1+\nu_1^2(2c_1-3)T_{22}-8c_{22}\nu_1^2\bigr).
\end{gather*}

The coordinates of separation are two solutions $x_1$, $x_2$ of
the quadratic in $u$ equation
\[B(x_i)=0.\]

The canonically conjugated momenta are given by the formula
\[
p_i=\frac{A(x_i)}{x_i}, \qquad i= 1,2.
\]

\begin{Remark}\label{remark12}
Observe that in the case $k=0$ the polynomial
$B(u)$ has only one non-constant root and, that is why, can not be
used in our construction of separated variables.
\end{Remark}

\subsubsection{The equations of separation and Abel-type equations}\label{section5.2.2}

In terms of the Hamiltonians $H_1$, $H_2$, the Casimir functions
$I_{21}$, $I_{22}$ and the canonical coordinates $x_i$, $p_i$ the
curves of separation (\ref{sepeqtrig1'}) have the following form
\begin{gather}
\Phi(x_i,p_i,H_1,H_2,I_{21},I_{22})=x_i^2p_i^2+\left(\frac{H_1+H_2}{2(c_{11}-c_{22})}-2(c_{11}+c_{22})\right)x_ip_i+
\frac{x_i\nu_1I_{21}}{(\nu_1-x_i)^2}\nonumber\\
-\frac{x_i\nu_1I_{22}}{(x_i+\nu_1)^2}-\frac{(\nu_1c_{11}-x_ic_{22})H_1}{(c_{11}-
c_{22})(\nu_1-x_i)}-
\frac{(\nu_1c_{11}+x_ic_{22})H_2}{(c_{11}-c_{22})(x_i+\nu_1)}+4c_{11}c_{22}=0, \qquad
 i= 1,2.\!\!\!\label{sepeq2qtG}
\end{gather}
Using either equations of separation (\ref{sepeq2qtG}) or the
reconstruction formulas it is possible to express $H_1$, $H_2$ via
the coordinates of separation $x_i$, $p_i$, $i= 1,2$ and the
Casimir functions.

Taking into account the canonical Poisson brackets
\begin{equation*}
\{ x_i, p_j\}=\d_{ij},\qquad \{x_i,x_j\}=0, \qquad \{p_i,p_j\}=0, \qquad i,j= 1,2,
\end{equation*}
 calculating with their help the time derivatives of the coordinates of
separation
\[
\frac{{\rm d}x_i}{{\rm d}t_j}\equiv\{x_i,H_j\}
\]
and making simple transformations we obtain Abel-type equations in the differential form
\begin{gather*}
\frac{(x_1p_1(x_1-\nu_1)+2c_{11}\nu_1-2c_{22}x_1) {\rm d}x_1}{\big(4x_1p_1(c_{11}-c_{22})+H_1+H_2-4\big(c_{11}^2-c_{22}^2\big)\big)
x_1(x_1-\nu_1)} \\ \qquad {}+
\frac{(x_2p_2(x_2-\nu_1)+2c_{11}\nu_1-2c_{22}x_2) {\rm d} x_2}{\big(4x_2p_2(c_{11}-c_{22})+H_1+H_2-4\big(c_{11}^2-c_{22}^2\big)\big)x_2(x_2-\nu_1)}=- {\rm d} t_1 ,
\\
 \frac{(x_1p_1(x_1+\nu_1)-2c_{11}\nu_1-2c_{22}x_1) {\rm d}x_1}{\big(4x_1p_1(c_{11}-c_{22})+H_1+H_{22}-
 4\big(c_{11}^2-c_{22}^2\big)\big)x_1(x_1+\nu_1)} \\ \qquad {}+
 \frac{(x_2p_2(x_2+\nu_1)-2c_{11}\nu_1-2c_{22}x_2) {\rm d}x_2}{\big(4x_2p_2(c_{11}-c_{22})+H_1+H_{22}-
 4\big(c_{11}^2-c_{22}^2\big)\big)x_2(x_2+\nu_1)} =-{\rm d}t_2,
\end{gather*}
where the momenta $p_i$, $i=1,2$ are calculated using the
curves of separation (\ref{sepeq2qtG}).

\subsection{The second family of separated variables}\label{sov2}
\subsubsection{The separating functions}\label{section5.3.1}

The second family of the separated variables is obtained using the
separating functions (\ref{btu2})--(\ref{atu2}). Let us specify it
for the Lax matrix (\ref{lax2qtg}). We will have
\begin{gather*}
B(u)=\frac{1}{2\big(u^2-\nu_1^2\big)} \bigl(
k\bigl((2c_1-1)S_{11}+2kS_{12}+k(2c_1+1)S_{22}+(2c_1-1)T_{11}+2kT_{12}\\
\hphantom{B(u)=}{} +(2c_1+1)T_{22}+4(c_{11}-c_{22})\bigr)u^2 +
\nu_1\bigl(2k^2S_{12}+2kT_{11}+2kS_{22}+2T_{21}-2k^2T_{12}\\
\hphantom{B(u)=}{}
-S_{21}-2kS_{11}-2kT_{22}\bigr)u -
\nu_1^2\big(k(2c_1+1)S_{11}-2S_{21}-k(2c_1-1)S_{22}\\
\hphantom{B(u)=}{}
-k(2c_1+1)T_{11}-
2T_{21}-k(2c_1-1)T_{22}-4k(c_{11}-c_{22})\big)\bigr),
\\
A(u) =\frac{1}{4\big(\nu_1^2-u^2\big)}
\bigl(\bigl((3-2c_1)S_{11}-4S_{12}k+(-2c_1-1)S_{22}+(3-2c_1)T_{11}-4T_{12}k\\
\hphantom{A(u)=}{} +(-2c_1-1)T_{22}- 8c_{11}\bigr)u^2+
(4S_{11}\nu_1-4T_{11}\nu_1-4kS_{12}\nu_1+4kT_{12}\nu_1)u\\
\hphantom{A(u)=}{}+\nu_1^2(2c_1+1)S_{11}+\nu_1^2(2c_1+1)S_{22}
+\nu_1^2(2c_1+1)T_{11} +
\nu_1^2(2c_1+1)T_{22}+8c_{11}\nu_1^2\bigr).
\end{gather*}

The coordinates of separation are two solutions $x_1$, $x_2$ of
the quadratic in $u$ equation
\[B(x_i)=0.\]

The canonically conjugated momenta are given by the formula
\[
p_i=-\frac{A(x_i)}{x_i}, \qquad i=1,2.
\]

\begin{Remark}\label{remark13}
Observe that in the case $k=0$ the polynomial
$B(u)$ has only one root and, that is why, can not be used in our
construction of separated variables.
\end{Remark}

\subsubsection{The Abel-type equations}\label{section5.3.2}
In terms of the Hamiltonians $H_1$, $H_2$ and Casimir functions
$I_{21}$, $I_{22}$ and the canonical coordinates $x_i$, $p_i$ the
curves of the separation (\ref{sepeqtrig2'}) have the following
form
\begin{gather}
\Phi(x_i,p_i,H_1,H_2,I_{21},I_{22})=x_i^2p_i^2-\left(\frac{H_1+H_2}{2(c_{11}-c_{22})}-2(c_{11}+c_{22})\right)x_ip_i+
\frac{x_i\nu_1I_{21}}{(\nu_1-x_i)^2}\nonumber\\
-\frac{x_i\nu_1I_{22}}{(x_i+\nu_1)^2} -\frac{(\nu_1c_{11}-x_ic_{22})H_1}{(c_{11}-
c_{22})(\nu_1-x_i)}-
\frac{(\nu_1c_{11}+x_ic_{22})H_2}{(c_{11}-c_{22})(x_i+\nu_1)}+4c_{11}c_{22}=0,
\qquad i= 1,2.\!\!\!\label{sepeq2qtG2}
\end{gather}
Using either equations of separation (\ref{sepeq2qtG2}) or the
reconstruction formulae it is possible to express $H_1$, $H_2$ via
the coordinates of separation $x_i$, $p_i$, $i= 1,2$ and the
Casimir functions.

Taking into account canonical commutation relations
\begin{equation*}
\{ x_i, p_j\}=\d_{ij},\qquad \{x_i,x_j\}=0, \qquad \{p_i,p_j\}=0, \qquad i,j=
 1,2 ,
\end{equation*}
 calculating with their help the time derivatives of the coordinates of
separation
\[
\frac{{\rm d}x_i}{{\rm d}t_j}\equiv\{x_i,H_j\}, \qquad i,j = 1,2
\]
and making simple transformations we obtain the equations of
motion in the Abel-type form
\begin{gather*}
\frac{(-x_1p_1(x_1-\nu_1)+2c_{11}\nu_1-2c_{22}x_1) {\rm d}x_1}{\big(4x_1p_1(c_{11}-c_{22})-H_1-H_2+4\big(c_{11}^2-c_{22}^2\big)\big)
x_1(x_1-\nu_1)} \\
\qquad{} +
\frac{(-x_2p_2(x_2-\nu_1)+2c_{11}\nu_1-2c_{22}x_2) {\rm d}x_2}{\big(4x_2p_2(c_{11}-c_{22})-H_1-H_2+4\big(c_{11}^2-c_{22}^2\big)\big)x_2(x_2-\nu_1)}
=-{{\rm d}t_1},
\\
 \frac{(-x_1p_1(x_1+\nu_1)-2c_{11}\nu_1-2c_{22}x_1) {\rm d}x_1}{\big(4x_1p_1(c_{11}-c_{22})-H_1-H_{22}+
 4\big(c_{11}^2-c_{22}^2\big)\big)x_1(x_1+\nu_1)} \\
 \qquad{} +
 \frac{(-x_2p_2(x_2+\nu_1)-2c_{11}\nu_1-2c_{22}x_2){\rm d}x_2}{\big(4x_2p_2(c_{11}-c_{22})-H_1-H_{22}+
 4\big(c_{11}^2-c_{22}^2\big)\big)x_2(x_2+\nu_1)}
=-{{\rm d}t_2},
\end{gather*}
where the momenta $p_i$, $i=1,2$ are calculated using the
curves of separation (\ref{sepeq2qtG2}).

\begin{Remark}
Observe, that the equations of separation and Abel-type equations for our two families of separated variables do not coincide. The corresponding reconstruction formulae do not coincide either (see Appendices~\ref{appendixA} and~\ref{appendixB}).
\end{Remark}

\section{Conclusion and discussion}\label{section6}

In this paper for the classical Lax-integrable Hamiltonian systems governed by
the one-pa\-ra\-met\-ric family of non-skew-symmetric, non-dynamical
$\mathfrak{gl}(2)\otimes \mathfrak{gl}(2)$-valued quasi-trigonometric classical
$r$-matrices we have constructed two new one-pa\-ra\-met\-ric families of separated variables.
 We have shown that for all but one $r$-matrices in the
considered one-pa\-ra\-met\-ric families the corresponding curves of
separation are ``shifted'' spectral curves of the
initial Lax matrix. We have illustrated the proposed scheme by an
example of SoV for $N=2$ quasi-trigonometric Gaudin models.

It would be interesting to specify other classes of non-skew-symmetric classical $r$-matrices for which separation curves are the shifted spectral curves of the Lax matrices and to perform SoV for them. The work over this problem is now in progress.

\appendix

\section{ The reconstruction formulae for the first SoV}\label{appendixA}
Let us now reconstruct the dynamical variables $S_{ij}$, $T_{ij}$,
$i,j = 1,2$ using the variables of separation $p_i$, $x_j$
constructed in Section~\ref{sov1} and the Casimir functions. For this purpose we solve a~system of eight linear-quadratic equations
on eight variables $S_{ij}$, $T_{ij}$:
\begin{subequations}\label{rec}
\begin{gather}
(x_1+x_2)=-\bigl(2k^2T_{12}+2S_{12}k^2+2k \nu_1S_{22}+2k
\nu_1T_{11}-2k \nu_1T_{22}-2\nu_1^2S_{21}
-2\nu_1^2T_{21}\nonumber\\
\hphantom{(x_1+x_2)=}{}
-2k \nu_1S_{11}\bigr)\bigl(k(2c_1-1)S_{11}-2\nu_1S_{21}+k(2c_1+1)S_{22}+
k(2c_1-1)T_{11}\nonumber\\
\hphantom{(x_1+x_2)=}{} +2T_{21}\nu_1+k(2c_1+1)T_{22}+4k(-c_{22}+c_{11})\bigr)^{-1},
\\
x_1x_2=
k\bigl(\nu_1^2(2c_1+1)S_{11} +2\nu_1S_{12}k-\nu_1^2(2c_1-1)S_{22}-\nu_1^2(2c_1+1)T_{11}
-2\nu_1T_{12}k\nonumber\\
\hphantom{x_1x_2=}{}
-\nu_1^2(2c_1-1)T_{22}- 4\nu_1^2(-c_{22}+c_{11})\bigr)
 \bigl(k(2c_1-1)S_{11}-2\nu_1S_{21}+k(2c_1+1)S_{22}\nonumber\\
\hphantom{x_1x_2=}{}
 +
k(2c_1-1)T_{11}+2T_{21}\nu_1+k(2c_1+1)T_{22}+4k(-c_{22}+c_{11})\bigr)^{-1},
\\
4\big(x_1^2-\nu_1^2\big)x_1p_1= \bigl(\bigl(-(1+2c_1)S_{11}-(1+2c_1)S_{22}
-(1+2c_1)T_{11}-(1+2c_1)T_{22}\nonumber\\
\hphantom{4\big(x_1^2-\nu_1^2\big)x_1p_1=}{}
+8c_{22}\bigr)x_1^2+ \bigl(-4S_{12}k+4T_{22}\nu_1-4T_{12}k-4S_{22}\nu_1\bigr)x_1+
\nu_1^2(2c_1+1)S_{11}\nonumber\\
\hphantom{4\big(x_1^2-\nu_1^2\big)x_1p_1=}{}
-
4kS_{12}\nu_1+\nu_1^2(2c_1-3)S_{22}+\nu_1^2(2c_1+1)T_{11} +4kT_{12}\nu_1\nonumber\\
\hphantom{4\big(x_1^2-\nu_1^2\big)x_1p_1=}{}
+\nu_1^2(2c_1-3)T_{22}-8c_{22}\nu_1^2\bigr),
\\
4\big(x_2^2-\nu_1^2\big)x_2p_2= \bigl(\bigl(-(1+2c_1)S_{11}-(1+2c_1)S_{22}
-(1+2c_1)T_{11}-(1+2c_1)T_{22}\nonumber\\
\hphantom{4\big(x_2^2-\nu_1^2\big)x_2p_2=}{}
+8c_{22}\bigr)x_2^2+ \bigl(-4S_{12}k+4T_{22}\nu_1-4T_{12}k-4S_{22}\nu_1\bigr)x_2+
\nu_1^2(2c_1+1)S_{11}\nonumber\\
\hphantom{4\big(x_2^2-\nu_1^2\big)x_2p_2=}{}
-
4k S_{12}\nu_1+\nu_1^2(2c_1-3)S_{22}+\nu_1^2(2c_1+1)T_{11} +4kT_{12}\nu_1\nonumber\\
\hphantom{4\big(x_2^2-\nu_1^2\big)x_2p_2=}{}
+\nu_1^2(2c_1-3)T_{22}-8c_{22}\nu_1^2\bigr),
\\
I_{21}= S_{11}S_{22}- S_{12}S_{21},
\\
I_{22}= T_{11}T_{22}-T_{12}T_{21},
\\
I_{11}= S_{11}+S_{22},
\\
I_{12}= T_{11}+T_{22}.
\end{gather}
\end{subequations}

The following proposition is proven by the direct calculations.

\begin{Proposition}
 The system of equations \eqref{rec} is solved with respect of the variables $S_{ij}$, $T_{ij}$, $i,j= 1,2$ as follows
\begin{gather*}
S_{11} =
-\frac{1}{2\nu_1}\bigl(x_1^2x_2(\nu_1-x_2)^2(\nu_1^2-x_1^2)^2p_1^2-x_1x_2(x_1+x_2)(\nu_1+x_2)(\nu_1-x_2)^2(\nu_1+x_1) \\
\hphantom{S_{11} =}{}
\times (\nu_1-x_1)^2 p_2p_1 +
x_1x_2^2(\nu_1^2-x_2^2)^2(\nu_1-x_1)^2p_2^2+2x_1(x_1-x_2)(\nu_1-x_2)^2(\nu_1+x_1) \\
\hphantom{S_{11} =}{}
\times
(\nu_1-x_1)^2(c_{11}\nu_1-x_2c_{22})p_1 -2x_2(x_1-x_2)(\nu_1+x_2)(\nu_1-x_2)^2(\nu_1-x_1)^2
 \\
\hphantom{S_{11} =}{}
\times (c_{11}\nu_1
-x_1c_{22})p_2+
4x_1x_2\nu_1^3(x_1-x_2)^2I_{21} -4c_{11}\nu_1c_{22}(x_1-x_2)^2(x_2-\nu_1)^2
 \\
\hphantom{S_{11} =}{}
\times
(x_1-\nu_1)^2\bigr)
\bigl((x_1-x_2)\bigl((x_1x_2(\nu_1+x_1)(\nu_1-x_1)^2(\nu_1-x_2)p_1 -x_1x_2(\nu_1+x_2)
 \\
\hphantom{S_{11} =}{}
\times(\nu_1-x_2)^2
(\nu_1-x_1)p_2+
2(x_1-x_2)(\nu_1-x_2)(\nu_1-x_1)(c_{11}\nu_1^2+x_1c_{22}x_2)\bigr)\bigr)^{-1},
\\
S_{12} = -\frac{x_1x_2}{2k
\nu_1}\bigl(x_1^2(\nu_1-x_2)^2(\nu_1-x_1)^2(\nu_1+x_1)^2p_1^2
-2x_1x_2(\nu_1+x_2)(\nu_1-x_2)^2(\nu_1+x_1)\\
\hphantom{S_{12} =}{}
\times (\nu_1-x_1)^2p_2p_1 +
x_2^2(\nu_1-x_2)^2(\nu_1+x_2)^2(\nu_1-x_1)^2p_2^2-4x_1c_{22}(x_1-x_2)(\nu_1-x_2)^2\\
\hphantom{S_{12} =}{}
\times
(\nu_1+x_1)(\nu_1-x_1)^2p_1+
4x_2c_{22} (x_1-x_2)(x_2+\nu_1)(\nu_1-x_2)^2(\nu_1-x_1)^2p_2\\
\hphantom{S_{12} =}{}
+
4\nu_1^4(x_1-x_2)^2I_{21}+
4c_{22}^2(x_1-x_2)^2(\nu_1-x_2)^2(\nu_1-x_1)^2\bigr)
 \bigl((\nu_1-x_2)(\nu_1-x_1)\\
\hphantom{S_{12} =}{}
\times
 (x_1-x_2)\bigl(x_1x_2\big(\nu_1^2-x_1^2\big)p_1-x_1x_2\big(a^2_1-x^2_2\big)p_2+2(x_1-x_2)\\
\hphantom{S_{12} =}{}
\times
\big(c_{11}\nu_1^2+x_1c_{22}x_2\big)\bigr)\bigr)^{-1},
\\
S_{21} =
\frac{k}{2 \nu_1 x_1x_2}\bigl(x_1^2x_2^2(\nu_1-x_2)^2(\nu_1-x_1)^2(\nu_1+x_1)^2p_1^2-2x_1^2x_2^2(\nu_1+x_2)(\nu_1-x_2)^2\\
\hphantom{S_{21} =}{}
\times (\nu_1+x_1)(\nu_1-x_1)^2 p_2p_1
+ x_1^2x_2^2(\nu_1-x_2)^2(\nu_1+x_2)^2(\nu_1-x_1)^2p_2^2+4c_{11}
\nu_1x_1x_2\\
\hphantom{S_{21} =}{}
\times
(x_1-x_2)(\nu_1-x_2)^2(\nu_1+x_1)(\nu_1-x_1)^2p_1 -4c_{11}\nu_1x_1x_2(x_1-x_2)(\nu_1+x_2)\\
\hphantom{S_{21} =}{}
\times
(\nu_1-x_2)^2(\nu_1-x_1)^2p_2+
4x_1^2\nu_1^2x_2^2(x_1-x_2)^2I_{21} +4\nu_1^2c_{11}^2(x_1-x_2)^2(\nu_1-x_2)^2\\
\hphantom{S_{21} =}{}
\times
(\nu_1-x_1)^2\bigr)
\bigl(x_1x_2(x_1-x_2)(\nu_1-x_2)(\nu_1+x_1)(\nu_1-x_1)^2p_1 -x_1x_2(x_1-x_2)\\
\hphantom{S_{21} =}{}
\times
(\nu_1+x_2)(\nu_1-x_2)^2(\nu_1-x_1)p_2+
2(x_1-x_2)^2(\nu_1-x_2)(\nu_1-x_1)\\
\hphantom{S_{21} =}{}
\times
(c_{11}\nu_1^2+x_1c_{22}x_2)\bigr)^{-1},
\\
S_{22}=-S_{11},
\\
T_{11}=-\frac{1}{2\nu_1}
\bigl(-x_1^2x_2(\nu_1+x_2)^2(\nu_1-x_1)^2(\nu_1+x_1)^2p_1^2+x_1x_2(x_1+x_2)(\nu_1-x_2)
(\nu_1+x_2)^2\\
\hphantom{T_{11}=}{}
\times
(\nu_1-x_1)(\nu_1+x_1)^2 p_2p_1-x_1x_2^2(\nu_1-x_2)^2(\nu_1+x_2)^2(\nu_1+x_1)^2p_2^2\\
\hphantom{T_{11}=}{}
-2x_1(-x_2+x_1)
(\nu_1+x_2)^2(\nu_1-x_1)(\nu_1+x_1)^2(c_{11}\nu_1+x_2c_{22})p_1 +
2x_2(-x_2+x_1)\\
\hphantom{T_{11}=}{}
\times
(\nu_1-x_2)(\nu_1+x_2)^2(\nu_1+x_1)^2(c_{11}\nu_1+x_1c_{22})p_2+4x_1x_2\nu_1^3(-x_2+x_1)^2I_{22} \\
\hphantom{T_{11}=}{}
-4c_{11}\nu_1c_{22}(-x_2+x_1)^2(\nu_1+x_2)^2(\nu_1+x_1)^2\bigr)
\bigl(\big(x_1x_2(\nu_1-x_1)(\nu_1+x_1)p_1\\
\hphantom{T_{11}=}{}
-x_1x_2(\nu_1-x_2)(\nu_1+x_2)p_2 +
2(-x_2+x_1)\big(c_{11}\nu_1^2x_1c_{22}x_2\big)\big)(-x_2+x_1)(\nu_1+x_2)\\
\hphantom{T_{11}=}{}
\times
(\nu_1+x_1)\bigr)^{-1},
\\
T_{12} = \frac{x_1x_2}{2k
\nu_1}\bigl(x_1^2(\nu_1+x_2)^2(\nu_1-x_1)^2(\nu_1+x_1)^2p_1^2-2x_1x_2(\nu_1-x_2)(\nu_1+x_2)^2(\nu_1-x_1)\\
\hphantom{T_{12} =}{}
\times
(\nu_1+x_1)^2p_2p_1 + x_2^2(\nu_1-x_2)^2(\nu_1+x_2)^2(\nu_1+x_1)^2p_2^2+
4x_1c_{22}(-x_2+x_1)\\
\hphantom{T_{12} =}{}
\times
(\nu_1+x_2)^2(\nu_1-x_1)(\nu_1+x_1)^2p_1-
4x_2c_{22} (-x_2+x_1)(\nu_1-x_2)(\nu_1+x_2)^2(\nu_1+x_1)^2\\
\hphantom{T_{12} =}{}
\times
p_2+4\nu_1^4(-x_2+x_1)^2I_{22}+
4c_{22}^2(-x_2+x_1)^2(\nu_1+x_2)^2(\nu_1+x_1)^2\bigr)
\bigl((x_1x_2(\nu_1^2-x_1^2)p_1\\
\hphantom{T_{12} =}{}
\times
-x_1x_2(\nu_1^2-x_2^2)p_2+2(x_1-x_2)(c_{11}\nu_1^2+
x_1c_{22}x_2))(x_1-x_2)(\nu_1+x_2)(\nu_1+x_1)\bigr)^{-1},
\\
T_{21} =
-\frac{k}{2\nu_1x_1x_2}\bigl(x_1^2x_2^2(\nu_1+x_2)^2(\nu_1-x_1)^2(\nu_1+x_1)^2p_1^2
-2x_1^2x_2^2(\nu_1-x_2)(\nu_1+x_2)^2\\
\hphantom{T_{21} =}{}
\times
(\nu_1-x_1)(\nu_1+x_1)^2 p_2p_1+
x_1^2x_2^2(\nu_1-x_2)^2(\nu_1+x_2)^2(\nu_1+x_1)^2p_2^2
+4c_{11}\nu_1x_1x_2\\
\hphantom{T_{21} =}{}
\times
(-x_2+x_1)
(\nu_1+x_2)^2(\nu_1-x_1)(\nu_1+x_1)^2p_1 -4c_{11}\nu_1x_1x_2(-x_2+x_1)(\nu_1-x_2)\\
\hphantom{T_{21} =}{}
\times
(\nu_1+x_2)^2(\nu_1+x_1)^2p_2+
4x_1^2\nu_1^2x_2^2(-x_2+x_1)^2I_{22} +4\nu_1^2c_{11}^2(-x_2+x_1)^2(\nu_1+x_2)^2\\
\hphantom{T_{21} =}{}
\times
(\nu_1+x_1)^2\bigr)
\bigl(\big(x_1x_2(x_1-x_2)(\nu_1+x_2)(\nu_1-x_1)(\nu_1+x_1)^2p_1-x_1x_2(x_1-x_2)\\
\hphantom{T_{21} =}{}
\times
(\nu_1-x_2)
(\nu_1+x_2)^2(\nu_1+x_1)p_2+
2(x_1-x_2)^2(\nu_1+x_2)(\nu_1+x_1)\\
\hphantom{T_{21} =}{}
\times
\big(c_{11}\nu_1^2+x_1c_{22}x_2\big)\big)\bigr)^{-1},
\\
T_{22}=-T_{11},
\end{gather*}
where for the purpose of simplicity we have
put $I_{11}=0$, $I_{12}=0$.
\end{Proposition}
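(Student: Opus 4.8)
The plan is to treat \eqref{rec} as an explicit system of eight polynomial equations in the eight unknowns $S_{ij},T_{ij}$ and to invert it in stages, exploiting the way the six data $x_1,x_2,p_1,p_2,I_{21},I_{22}$ enter. First I would use the two linear Casimir relations, which on the leaf $I_{11}=I_{12}=0$ read $S_{11}+S_{22}=0$ and $T_{11}+T_{22}=0$; these give at once $S_{22}=-S_{11}$ and $T_{22}=-T_{11}$ and reduce the problem to the six generators $S_{11},S_{12},S_{21},T_{11},T_{12},T_{21}$.

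Next I would linearise. After clearing the common denominators, the two equations expressing the elementary symmetric functions $x_1+x_2$ and $x_1x_2$ of the roots of $B(u)$ through the coefficients of $B(u)$ in \eqref{btu1}, together with the two momentum equations obtained from $p_i=A(x_i)/x_i$ with $A$ as in \eqref{atu1} (equivalently, $4(x_i^2-\nu_1^2)x_i p_i$ equals the numerator of $A$ evaluated at $x_i$), become four \emph{inhomogeneous linear} equations in the six remaining generators, with coefficients that are explicit rational functions of $x_1,x_2,p_1,p_2$ and of the parameters $\nu_1,c_1,k,c_{11},c_{22}$. I would solve this $4\times 6$ system, expressing four of the generators — most conveniently the off-diagonal pairs $S_{12},S_{21}$ and $T_{12},T_{21}$, which carry the explicit factors $k$ and $k^{-1}$ — as affine functions of the two diagonal generators $S_{11}$ and $T_{11}$.

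Finally I would close the system with the two quadratic Casimir relations, which on the leaf read $I_{21}=-S_{11}^2-S_{12}S_{21}$ and $I_{22}=-T_{11}^2-T_{12}T_{21}$. Substituting the affine solutions of the previous step leaves two polynomial equations for the diagonal generators $S_{11}$ and $T_{11}$; because the forward separation map is birational onto the symplectic leaf of \eqref{lax2qtg} and its inverse is symmetric in the two pairs $(x_1,p_1)$ and $(x_2,p_2)$, this residual system is solved rationally, and one finds — as a feature confirmed in the course of the computation — that the dependence on the two quadratic Casimirs decouples, $I_{21}$ entering only the $S$-block and $I_{22}$ only the $T$-block. Back-substitution then produces the displayed closed expressions for all eight generators.

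The conceptual content is light: the only nontrivial facts are that the $4\times 6$ linear system has the maximal rank needed for the elimination (equivalently, that the Jacobian of the separation map is nonzero, so that the map is a local diffeomorphism onto the leaf) and that the two quadratic closures decouple as just described. The genuine obstacle is therefore not conceptual but the sheer algebraic bulk: the coefficients of the linear system, and especially the result of inserting them into the quadratic Casimirs, are very large rational expressions — as the size of the final formulas already indicates — so that the elimination is in practice carried out with computer algebra, the correct branch being fixed by matching a single reference configuration.
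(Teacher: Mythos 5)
Your proposal is essentially the paper's own approach: the paper simply states that the proposition "is proven by the direct calculations," i.e., by explicit algebraic elimination in the system \eqref{rec}, which is exactly the staged elimination (linear Casimirs, then the four affine-linear equations, then the quadratic Casimirs, with the branch fixed by consistency) that you describe, in practice carried out by computer algebra. No substantive difference to report.
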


\section{The reconstruction formulae for the second SoV}\label{appendixB}
Let us now reconstruct the dynamical variables $S_{ij}$, $T_{ij}$,
$i,j= 1,2$ using the variables of separation $p_i$,
$x_j$ constructed in Section~\ref{sov2} and the Casimir functions. For this purpose we solve a~system of eight linear-quadratic
equations in eight variables $S_{ij}$, $T_{ij}$:
\begin{subequations}\label{rec2}
\begin{gather}
(x_1+x_2)=-\frac{\nu_1}{k}(2k^2S_{12}+2kT_{11}+2kS_{22}+2T_{21}-2k^2T_{12}-S_{21}-2kS_{11}-2kT_{22})\nonumber\\
\hphantom{(x_1+x_2)=}{}
\times
\bigl((2c_1-1)S_{11}+2kS_{12}+k(2c_1+1)S_{22}+(2c_1-1)T_{11}+2kT_{12}\nonumber\\
\hphantom{(x_1+x_2)=}{}
+(2c_1+1)T_{22}+4(c_{11}-c_{22})\bigr)^{-1},
\\
x_1x_2=
\frac{\nu_1^2}{k}(k(2c_1+1)S_{11}-2S_{21}-k(2c_1-1)S_{22}-k(2c_1+1)T_{11}-
2T_{21}-k(2c_1-1)T_{22}\nonumber\\
\hphantom{x_1x_2=}{}
-4k(c_{11}-c_{22})
\bigl((2c_1-1)S_{11}+2kS_{12}+k(2c_1+1)S_{22}+(2c_1-1)T_{11}+2kT_{12}\nonumber\\
\hphantom{x_1x_2=}{}
+(2c_1+1)T_{22}+4(c_{11}-c_{22})\bigr)^{-1},
\\
-4\big(x_1^2-\nu_1^2\big)x_1p_1=
\bigl(((3-2c_1)S_{11}-4S_{12}k+(-2c_1-1)S_{22}+(3-2c_1)T_{11}-4T_{12}k\nonumber\\
\hphantom{-4\big(x_1^2-\nu_1^2\big)x_1p_1=}{}
+(-2c_1-1)T_{22}- 8c_{11}) x_1^2+
(4S_{11}\nu_1-4T_{11}\nu_1-4kS_{12}\nu_1\nonumber\\
\hphantom{-4\big(x_1^2-\nu_1^2\big)x_1p_1=}{}
+4kT_{12}\nu_1)x_1+\nu_1^2(2c_1+1)S_{11}
+\nu_1^2(2c_1+1)S_{22}
+\nu_1^2(2c_1+1)T_{11}
\nonumber\\
\hphantom{-4\big(x_1^2-\nu_1^2\big)x_1p_1=}{}
 +
\nu_1^2(2c_1+1)T_{22}+8c_{11}\nu_1^2\bigr),
\\
-4\big(x_2^2-\nu_1^2\big)x_2p_2=
\bigl(((3-2c_1)S_{11}-4S_{12}k+(-2c_1-1)S_{22}+(3-2c_1)T_{11}-4T_{12}k\nonumber\\
\hphantom{-4\big(x_2^2-\nu_1^2\big)x_2p_2=}{}
+(-2c_1-1)T_{22}- 8c_{11}) x_2^2 +
(4S_{11}\nu_1-4T_{11}\nu_1-4kS_{12}\nu_1
\nonumber\\
\hphantom{-4\big(x_2^2-\nu_1^2\big)x_2p_2=}{}
+4kT_{12}\nu_1)x_2+\nu_1^2(2c_1+1)S_{11}+\nu_1^2(2c_1+1)S_{22}
+\nu_1^2(2c_1+1)T_{11}
\nonumber\\
\hphantom{-4\big(x_2^2-\nu_1^2\big)x_2p_2=}{}
 +
\nu_1^2(2c_1+1)T_{22}+8c_{11}\nu_1^2\bigr),
\\
I_{21}= S_{11}S_{22}- S_{12}S_{21},
\\
I_{22}= T_{11}T_{22}-T_{12}T_{21},
\\
I_{11}= S_{11}+S_{22},
\\
I_{12}= T_{11}+T_{22}.
\end{gather}
\end{subequations}

The following proposition is proven by the direct calculations.
\begin{Proposition}
The system of equations \eqref{rec2} is solved with respect of the variables $S_{ij}$, $T_{ij}$, $i,j= 1,2$ as follows
\begin{gather*}
S_{11}
=-\frac{1}{2\nu_1}\bigl(x_1^2x_2(-x_2+\nu_1)^2\big(\nu_1^2-x_1^2\big)^2p_1^2-x_2x_1(x_1+x_2)(x_2+\nu_1)
(\nu_1-x_2)^2\\
\hphantom{S_{11}=}{}
\times\big(\nu_1^2-x_1^2\big)^2p_2p_1+ x_2^2x_1(-x_2+\nu_1)^2(x_2+\nu_1)^2(\nu_1-x_1)^2p_2^2-2x_1(-x_2+x_1)\\
\hphantom{S_{11}=}{}
\times
(-x_2+\nu_1)^2(x_1+\nu_1)(-x_1+\nu_1)^2(c_{11}\nu_1-c_{22}x_2)) p_1
+2x_2(x_1-x_2)(x_2+\nu_1)\\
\hphantom{S_{11}=}{}
\times
(\nu_1-x_2)^2(\nu_1-x_1)^2(c_{11}\nu_1-c_{22}x_1)p_2+4x_1\nu_1^3x_2(x_1-x_2)^2I_{21} -
4c_{11}\nu_1c_{22}\\
\hphantom{S_{11}=}{}
\times
(x_1-x_2)^2(\nu_1-x_2)^2(\nu_1-x_1)^2\bigr)
\bigl(\big({-}x_1x_2(x_1+\nu_1)(\nu_1-x_1)p_1+x_1x_2(x_2+\nu_1)\\
\hphantom{S_{11}=}{}
\times
(\nu_1-x_2)p_2 +2(x_1-x_2)\big(x_1c_{22}x_2+c_{11}\nu_1^2\big)\big)
(x_1-x_2)(\nu_1-x_2)(\nu_1-x_1)\bigr)^{-1},
\\
S_{12}
=-\frac{1}{2kx_1x_2}\bigl(x_2^2x_1^2(\nu_1-x_2)^2(\nu_1-x_1)^2(x_1+\nu_1)^2p_1^2-2x_2^2x_1^2
(x_2+\nu_1)(\nu_1-x_2)^2\\
\hphantom{S_{12}=}{}
\times
(x_1+\nu_1)(\nu_1-x_1)^2 p_2p_1+
x_2^2x_1^2(\nu_1-x_2)^2(x_2+\nu_1)^2(\nu_1-x_1)^2p_2^2-4\nu_1x_2c_{11}x_1\\
\hphantom{S_{12}=}{}
\times
(x_1-x_2)(\nu_1-x_2)^2(x_1+\nu_1)(\nu_1-x_1)^2p_1 +4\nu_1x_2c_{11}x_1(x_1-x_2)(x_2+\nu_1)\\
\hphantom{S_{12}=}{}
\times
(\nu_1-x_2)^2(\nu_1-x_1)^2p_2+
4x_2^2\nu_1^2x_1^2(x_1-x_2)^2I_{21} +4\nu_1^2c_{11}^2(x_1-x_2)^2(\nu_1-x_2)^2\\
\hphantom{S_{12}=}{}
\times
(\nu_1-x_1)^2\bigr)
\bigl(\big({-}x_2x_1(x_1-x_2)(\nu_1-x_2)(x_1+\nu_1)(\nu_1-x_1)^2p_1
 +x_2x_1(x_1-x_2)\\
\hphantom{S_{12}=}{}
\times
 (x_2+\nu_1)(\nu_1-x_2)^2(\nu_1-x_1)p_2+2(x_1-x_2)^2(\nu_1-x_2)(\nu_1-x_1)\\
\hphantom{S_{12}=}{}
\times
 \big(x_1c_{22}x_2+
c_{11}\nu_1^2\big)\big)\bigr)^{-1},
\\
S_{21} =
\frac{x_2x_1k}{2\nu_1^2}\bigl(x_1^2(\nu_1-x_2)^2(\nu_1-x_1)^2(x_1+\nu_1)^2p_1^2-2x_1x_2(x_2+\nu_1)(\nu_1-x_2)^2(x_1+\nu_1)\\
\hphantom{S_{21} =}{}
\times
(\nu_1-x_1)^2p_2p_1 +x_2^2(\nu_1-x_2)^2(x_2+\nu_1)^2(\nu_1-x_1)^2p_2^2+
4c_{22}x_1(x_1-x_2)(\nu_1-x_2)^2\\
\hphantom{S_{21} =}{}
\times
(x_1+\nu_1)(\nu_1-x_1)^2p_1-
4c_{22}x_2 (x_1-x_2)(x_2+\nu_1)(\nu_1-x_2)^2(\nu_1-x_1)^2p_2\\
\hphantom{S_{21} =}{}
+4\nu_1^4(x_1-x_2)^2I_{21}+4c_{22}^2
(x_1-x_2)^2(\nu_1-x_2)^2(\nu_1-x_1)^2\bigr)
\bigl(\big({-}x_1x_2(\nu_1-x_1)\\
\hphantom{S_{21} =}{}
\times
(x_1+\nu_1)(x_1-x_2)p_1+x_1x_2(\nu_1-x_2)(x_2+\nu_1)(x_1-x_2)p_2 +
2(x_1-x_2)^2\\
\hphantom{S_{21} =}{}
\times
\big(x_1c_{22}x_2+c_{11}\nu_1^2\big)\big)(\nu_1-x_2)(\nu_1-x_1)\bigr)^{-1},
\\
S_{22}=-S_{11},
\\
T_{11}=
-\frac{1}{2\nu_1}\bigl(-x_1^2x_2(x_2+\nu_1)^2\big(\nu_1^2-x_1^2\big)^2p_1^2+
x_1x_2(x_1+x_2)(\nu_1-x_2)(x_2+\nu_1)^2(\nu_1-x_1)\\
\hphantom{T_{11}=}{}
\times
(x_1+\nu_1)^2p_2p_1-
x_2^2x_1(\nu_1-x_2)^2(x_2+\nu_1)^2(x_1+\nu_1)^2p_2^2+2x_1(x_1-x_2)
\\
\hphantom{T_{11}=}{}
\times
(x_2+\nu_1)^2(\nu_1-x_1)(x_1+\nu_1)^2(c_{11}\nu_1+c_{22}x_2))p_1 -2x_2(x_1-x_2)(\nu_1-x_2)
\\
\hphantom{T_{11}=}{}
\times
(x_2+\nu_1)^2(x_1+\nu_1)^2(c_{11}\nu_1+c_{22}x_1)p_2+4x_1\nu_1^3x_2(x_1-x_2)^2I_{22} -
4c_{11}\nu_1c_{22}\\
\hphantom{T_{11}=}{}
\times
(x_1-x_2)^2(x_2+\nu_1)^2(x_1+\nu_1)^2\bigr)
\bigl((x_1-x_2)\big({-}x_1x_2(\nu_1-x_1)(x_1+\nu_1)^2(x_2+\nu_1)p_1 \\
\hphantom{T_{11}=}{}
+
x_1x_2(\nu_1-x_2)(x_2+\nu_1)^2(x_1+\nu_1)p_2+2(x_1-x_2)(x_2+\nu_1)(x_1+\nu_1)\\
\hphantom{T_{11}=}{}
\times
\big(x_1c_{22}x_2+c_{11}\nu_1^2\big)\big)\bigr)^{-1},
\\
T_{12}=
-\frac{x_1x_2k}{2}\bigl(x_1^2x_2^2(x_2+\nu_1)^2\big(\nu_1^2-x_1^2\big)^2p_1^2
-2x_1^2x_2^2(\nu_1-x_2)(x_2+\nu_1)^2(\nu_1-x_1)\\
\hphantom{T_{12}=}{}
\times
(x_1+\nu_1)^2p_2p_1 +
x_1^2x_2^2(\nu_1-x_2)^2(x_2+\nu_1)^2(x_1+\nu_1)^2p_2^2-4\nu_1x_2c_{11}x_1
(x_1-x_2)\\
\hphantom{T_{12}=}{}
\times
(x_2+\nu_1)^2(\nu_1-x_1)(x_1+\nu_1)^2p_1 +4\nu_1x_2c_{11}x_1(x_1-x_2)(\nu_1-x_2)(x_2+\nu_1)^2\\
\hphantom{T_{12}=}{}
\times
(x_1+\nu_1)^2p_2+
4\nu_1^2x_2^2x_1^2(x_1-x_2)^2I_{22} +4\nu_1^2c_{11}^2(x_1-x_2)^2(x_2+\nu_1)^2(x_1+\nu_1)^2\bigr)\\
\hphantom{T_{12}=}{}
\times
\bigl(\big({-}x_1x_2(x_1-x_2)(x_2+\nu_1)(\nu_1-x_1)(x_1+\nu_1)^2p_1 +x_1x_2(x_1-x_2)(\nu_1-x_2)\\
\hphantom{T_{12}=}{}
\times
(x_2+\nu_1)^2(x_1+\nu_1)p_2+
2(x_1-x_2)^2(x_2+\nu_1)(x_1+\nu_1)\big(x_1c_{22}x_2+c_{11}\nu_1^2\big)\big)\bigr)^{-1},
\\
T_{21}= \frac{kx_1x_2}{2
\nu_1^2}\bigl(x_1^2(x_2+\nu_1)^2\big(\nu_1^2-x_1^2\big)^2p_1^2
-2x_1x_2(\nu_1-x_2)(x_2+\nu_1)^2(\nu_1-x_1)\\
\hphantom{T_{21}=}{}
\times
(x_1+\nu_1)^2p_2p_1+
x_2^2(\nu_1-x_2)^2(x_2+\nu_1)^2(x_1+\nu_1)^2p_2^2-
4c_{22}x_1(x_1-x_2)\\
\hphantom{T_{21}=}{}
\times
(x_2+\nu_1)^2(\nu_1-x_1)(x_1+\nu_1)^2p_1+
4c_{22}x_2 (x_1-x_2)
(\nu_1-x_2)(x_2+\nu_1)^2\\
\hphantom{T_{21}=}{}
\times
(x_1+\nu_1)^2p_2+4\nu_1^4(x_1-x_2)^2I_{22}+
4c_{22}^2(x_1-x_2)^2
(x_2+\nu_1)^2(x_1+\nu_1)^2\bigr)\\
\hphantom{T_{21}=}{}
\times
 \bigl((-x_1x_2(\nu_1-x_1)
(x_1+\nu_1)(x_1-x_2)p_1+x_1x_2(\nu_1-x_2)(x_2+\nu_1)(x_1-x_2)p_2 \\
\hphantom{T_{21}=}{}
+
2(x_1-x_2)^2(x_1c_{22}x_2+c_{11}\nu_1^2))(x_2+\nu_1)(x_1+\nu_1)\bigr)^{-1},
\\
T_{22}=-T_{11},
\end{gather*}
where for the purpose of simplicity we have put
 $I_{11}=0$, $I_{12}=0$.
\end{Proposition}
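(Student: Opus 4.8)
The plan is to read \eqref{rec2} as a determined system of eight polynomial equations in the eight unknowns $S_{ij}$, $T_{ij}$ and to solve it by elimination, exploiting the block structure inherited from the Lax matrix \eqref{lax2qtg}. One simplification is immediate: with $I_{11}=I_{12}=0$ the two linear Casimir relations (the last two equations of \eqref{rec2}) give $S_{22}=-S_{11}$ and $T_{22}=-T_{11}$, which already reproduces two of the asserted formulae and leaves the six unknowns $S_{11},S_{12},S_{21},T_{11},T_{12},T_{21}$.

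Next I would treat the remaining six relations as a linear core plus a quadratic closure. After clearing the denominators on their left-hand sides, the first four equations of \eqref{rec2} --- the two coming from the elementary symmetric functions $x_1+x_2$, $x_1x_2$ of the roots of $B(u)$, and the two coming from the values $A(x_i)=-x_ip_i$ --- are \emph{linear} in $S_{11},S_{12},S_{21},T_{11},T_{12},T_{21}$ for fixed values of the separated coordinates $x_i$, $p_i$ and the parameters. These four linear equations generically have rank four, so they cut the six-dimensional space of unknowns down to a two-parameter affine family, and the two quadratic Casimir relations $I_{21}=S_{11}S_{22}-S_{12}S_{21}$ and $I_{22}=T_{11}T_{22}-T_{12}T_{21}$ then fix the two remaining parameters.

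The structural reason behind the decoupled shape of the final expressions --- each $S_{ij}$ depending on $I_{21}$ but not on $I_{22}$, and conversely --- is the pole structure of the separating functions \eqref{btu2}--\eqref{ctu2}. Because $L(u)=L^S(u)+L^T(u)+c(u)$ with $L^S$ having its only pole at $\nu_1$ and $L^T$ at $\nu_2=-\nu_1$, the residues of $B(u)$, $A(u)$ and $C(u)=L_{21}(u)$ at $u=\nu_1$ are built purely from the $S$-variables and those at $u=-\nu_1$ purely from the $T$-variables. I would therefore organise the linear elimination by partial fractions at the two poles; this is what ultimately forces the $S$-dependence (closed by $I_{21}$) and the $T$-dependence (closed independently by $I_{22}$) to separate in the final formulae. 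The $k$-weights $k^{\pm1}$ carried by the off-diagonal entries, and the $k$-independence of $S_{11},T_{11}$, are dictated by the gauge matrix $K=\left(\begin{matrix}1&k\\0&1\end{matrix}\right)$ through $B(u)=L^K_{12}(u)$, which provides a useful consistency check at every stage.

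The genuine obstacle is not conceptual but computational: after the linear elimination, solving the two quadratic equations and back-substituting into the explicit coefficients of \eqref{lax2qtg} produces the bulky rational functions displayed in the proposition, so the elimination is best carried out and then confirmed with a computer algebra system. The one non-mechanical point is the correct choice of root of each quadratic; once that branch is pinned down by consistency with the forward map of Section~\ref{section5.3.1} (equivalently, the branch on which the separation curve \eqref{sepeqtrig2'} holds), the verification reduces to substituting the claimed formulae back into all eight equations of \eqref{rec2} and checking the identities.
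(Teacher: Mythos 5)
Your proposal is correct and follows essentially the same route as the paper, whose entire argument is the single sentence that the proposition ``is proven by the direct calculations'': a direct elimination in the eight equations of \eqref{rec2} (linear core from the symmetric functions of the $x_i$ and the values $A(x_i)$, closed by the quadratic Casimirs, verified by back-substitution, in practice with a computer algebra system). Your structural remarks on the pole-wise decoupling of the $S$- and $T$-blocks and the $k$-weights go beyond what the paper records, but they are consistency checks on the same computation rather than a different method.
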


\subsection*{Acknowledgements}

The author is grateful to the late Boris Dubrovin for the discussions. The work over this paper was partially supported by the Division of Physics and Astronomy of NAS of Ukraine (Project No.~0117U000240).

\pdfbookmark[1]{References}{ref}
\LastPageEnding

\end{document}